\long\def\commentWeizmann #1\commentWeizmannend{}
\newcommand{\cA}{{\cal A}}
\newcommand{\hide}[1]{}   
\begin{document}

\title{{More efficient periodic traversal in anonymous undirected graphs}}
\author{
J.~Czyzowicz\thanks{\tiny D\'epartement d'Informatique,
                          Universit\'e du Qu\'ebec en Outaouais,
                          Gatineau, Qu\'ebec J8X 3X7, Canada.
                          E-mail: {\tt jurek@uqo.ca}.}
\and
S.~Dobrev\thanks{\tiny Institute of Mathematics,
                       Slovak Academy of Sciences,
                       Dubravska 9, P.O.Box 56, 840 00,
                       Bratislava, Slovak Republic.
                       E-mail: {\tt stefan@ifi.savba.sk}.}
\and
L.~G\k{a}sieniec\thanks{\tiny Department of Computer Science,
                              University of Liverpool,
                              Ashton Street, Liverpool, L69 3BX, United Kingdom.
                              E-mail: {\tt \{L.A.Gasieniec,Russell.Martin\}@liverpool.ac.uk}.
                              L.~G\k{a}sieniec partially funded by the
                              Royal Society International Joint Project, IJP - 2007/R1.
                              R.~Martin partially funded by the Nuffield Foundation
                              grant NAL/32566, ``The structure and efficient
                              utilization of the Internet and other distributed systems''.}
\and
D.~Ilcinkas\thanks{\tiny CNRS, LABRI, Universit\'e Bordeaux I,
                         33 400 Talence, France.
                         E-mail: {\tt \{david.ilcinkas,ralf.klasing\}@labri.fr.}
                         Supported in part by the ANR projects ALADDIN and ALPAGE,
                         the INRIA project CEPAGE,
                         and the European projects GRAAL and DYNAMO.}
\and
J.~Jansson\thanks{\tiny Ochanomizu University, 2-1-1 Otsuka,
                        Bunkyo-ku, Tokyo 112-8610, Japan.
                        E-mail: \texttt{Jesper.Jansson@ocha.ac.jp}.
                        Funded by the Special Coordination Funds for
                        Promoting Science and Technology.}
\and
R.~Klasing$^{\dag}$
\and
I.~Lignos\thanks{\tiny Department of Computer Science,
                       Durham University, South Road, Durham, DH1 3LE, UK.
                       E-mail: {\tt yannis.lignos@durham.ac.uk}.}
\and
R.~Martin$^{\star\ \! \star\ \! \star}$
\and
K.~Sadakane\thanks{\tiny Principles of Informatics Research Division,
                         National Institute of Informatics.
                         2-1-2 Hitotsubashi, Chiyoda-ku, Tokyo 101-8430, Japan.
                         E-mail: {\tt sada@nii.ac.jp}.}
\and
W.-K.~Sung\thanks{\tiny Department of Computer Science,
                        National University of Singapore,
                        3 Science Drive 2, 117543 Singapore.
                        E-mail: {\tt ksung@comp.nus.edu.sg}.}
}

\institute{}

\date{}
\maketitle
\def\thefootnote{\fnsymbol{footnote}}

\begin{abstract}
  We consider the problem of {\em periodic graph exploration} in which
  a mobile entity with constant memory, {\sl an agent}, has to visit
  all $n$ nodes of an arbitrary undirected graph $G$ in a periodic
  manner. Graphs are supposed to be anonymous, that is, nodes are
  unlabeled. However, while visiting a node, the robot has to
  distinguish between edges incident to it. For each node $v$ the
  endpoints of the edges incident to $v$ are uniquely identified by
  different integer labels called {\sl port numbers}. We are interested in
  minimisation of the length of the exploration period.

  This problem is unsolvable if the local port numbers are set
  arbitrarily, see~\cite{Bu78}. However, surprisingly small periods
  can be achieved when assigning carefully the local port
  numbers. Dobrev et al.~\cite{DJSS05} described an algorithm for
  assigning port numbers, and an oblivious agent (i.e. agent with no
  memory) using it, such that the agent explores all graphs of size
  $n$ within period $10n$. Providing the agent with a constant number
  of memory bits, the optimal length of the period was proved
  in~\cite{GKMNZ08} to be no more than $3.75n$ (using a different
  assignment of the port numbers). In this paper, we improve both
  these bounds. More precisely, we show a period of length at most
  $4\frac{1}{3}n$ for oblivious agents, and a period of length at most
  $3.5n$ for agents with constant memory. Moreover, we give the first
  non-trivial lower bound, $2.8n$, on the period length for the
  oblivious case.
\end{abstract}




\section{Introduction}

Efficient search in unknown or unmapped environments is one of the
fundamental problems in algorithmics. Its applications range from
robot navigation in hazardous environments to rigorous exploration
(and, e.g., indexing) of data available on the Internet. Due to a
strong need to design simple and cost effective agents as well as to
design exploration algorithms that are suitable for rigorous
mathematical analysis, it is of practical importance to limit the
local memory of agents.

We consider the task of graph exploration by a mobile entity
equipped with small (constant number of bits) memory. The mobile
entity may be, e.g., an autonomous piece of software navigating
through an underlying graph of connections of a computer network.
The mobile entity is expected to visit all nodes in the graph in a
periodic manner. For the sake of simplicity, we call the mobile
entity an {\em agent} and model it as a finite state automaton. The
task of periodic traversal of all nodes of a network is particularly
useful in network maintenance, where the status of every node has to
be checked regularly.

We consider here undirected graphs that are anonymous, i.e., the nodes
in the graph are neither labelled nor colored. To enable the agent to
distinguish the different edges incident to a node, edges at a node
$v$ are assigned {\em port numbers} in $\{1,\dots,d_v\}$ in a
one-to-one manner, where $d_v$ is the degree of node $v$.

We model agents as {\em Mealy
  automata}. The Mealy automaton has a finite number of states and a
transition function $f$ governing the actions of the agent.
If the automaton enters a node $v$ of degree $d_v$ through
port $i$ in state $s$, it switches to state $s'$ and exits the node
through port $i'$, where $(s',i')=f(s,i,d_v)$. The memory size of an
agent is related to its number of states, more precisely it equals the
number of bits needed to encode these states. For example an oblivious
agent has a single state, or equivalently zero memory bits. Note that
the size of the agent memory represents in this model the amount of
information that the agent can remember while moving. This does not
restrict computations made on a node and thus the transition function
can be any deterministic function. Additional memory needed for
computations can be seen as provided temporarily by the hosting
node. Nevertheless, our agent algorithms perform very simple tests and
operations on the non-constant inputs $i$ and $d$, namely equality
tests and incrementations.

Periodic graph exploration requires that the agent has to visit
every node infinitely many times in a periodic manner. In this
paper, we are interested in minimising the length of the
exploration period. In other words, we want to minimise the maximum
number of edge traversals performed by the agent between two
consecutive visits of a generic node, while the agent enters this
node in the same state through the same port.

However, Rollik~\cite{Rol80} proved that this problem is
unsolvable as an agent needs $\Omega(\log n)$ memory bits to explore
all graphs of size $n$, even restricted to cubic planar graphs. This
lower bound has been proved recently to be actually optimal by
Reingold in his breakthrough paper~\cite{Rein}. Providing the agent
with a pebble to mark nodes does not help much as the asymptotic size
of memory needed remains $\Omega(\log n)$
bits~\cite{FraIlcRajTix06}. In fact, even a highly-coordinated
multi-agent team capable of (restricted) teleportation cannot explore
all graphs with constant memory~\cite{CooRac80}.

Nevertheless, putting some information in the graph does help a
lot. Cohen et al.~\cite{CohFraIlcKorPel05} showed that putting two
bits of advice at each node allows to explore all graphs by an agent
with constant memory, by a periodic traversal of length $O(m)$, where $m$ is the
number of edges. In fact, the impossibility results presented above
all use the ability of the adversary to assign the local port numbers
in a misleading order. On the other hand, even if nodes are not marked
in any way but if port numbers are carefully assigned (still
satisfying the condition that at each node $v$, port numbers from $1$
to $d_v$ are used), then a simple agent, even oblivious, can perform
periodic graph exploration within period of length $O(n)$. Using
appropriate assignment of the local port numbers, the best known
period achieved by an oblivious agent is $10n$~\cite{DJSS05} whereas
the best known period achieved by an agent with constant memory is
$3.75n$~\cite{GKMNZ08}.

\subsection{Our results}

In this paper, we improve both
these bounds. More precisely, we present an efficient deterministic
algorithm assigning port numbers in the graph, such that, an oblivious
agent is able to accomplish each period of the traversal route in at
most $4\frac{1}{3}n$.
Our algorithm uses a new three-layer partition of graphs permitting an
optimal $O(|E|)-$time construction of the port labeling.
As a complement, we present a class of graphs in which an oblivious agent
performs a tour of at most $2n$.
In addition, we present another
algorithm assigning port numbers in the graph,
also using the three-layer partitioning approach,
such that, an agent with
constant memory is able to accomplish periodic graph exploration within period
at most $3.5n$.
Moreover, we give the first
non-trivial lower bound, $2.8n$, on the period length for the
oblivious case.

\subsection{Related Work}\label{Sec-Rel}
Graph exploration by robots has recently attracted growing
attention. The unknown environment in which the robots operate is
often modelled as a graph, assuming that the robots may only move
along its edges. The graph setting is available in two different
forms.

In~\cite{AlbHe00,BenFRSV98,BenSl94,DePa99,FT}, the robot explores
strongly connected directed graphs and it can move only in one
pre-specified direction along each edge.
In~\cite{AwBS99,BRS2,CohFraIlcKorPel05,DunKK01,FraIPPP04,FraIlcRajTix06,PanPe99}, the
explored graph is undirected and the agent can traverse edges in
both directions.
Also, two alternative efficiency measures are adopted in most papers
devoted to graph exploration, namely, the {\em time} of completing
the task~\cite{AlbHe00,AwBS99,BenFRSV98,BenSl94,BRS2,DePa99,DunKK01},
or the number of {\em
memory bits} (states in the automaton) available to the agent
\cite{CohFraIlcKorPel05,DiFKP02,FraIl04,FraIPPP04,FraIlcRajTix06,GPRZ07}.

Graph exploration scenarios considered in the literature differ in
an important way: it is either assumed that nodes of the graph have
unique labels which the agent can recognise, or it is assumed that
nodes are anonymous. Exploration of directed graphs assuming the
existence of labels was investigated in~\cite{AlbHe00,DePa99,FT}. In this
case, no restrictions on the agent moves were imposed, other than by
directions of edges, and fast exploration algorithms were sought.
Exploration of undirected labelled graphs was considered
in~\cite{AwBS99,AK,BRS2,DunKK01,PaPe}. Since in this case a simple
exploration based on depth-first search can be completed in time
$2m$, where $m$ is the number of edges, investigations concentrated
either on further reducing the time for an unrestricted agent, or on
studying efficient exploration when moves of the agent are
restricted in some way.  The first approach was adopted in
\cite{PaPe}, where an exploration algorithm working in time
$m+O(n)$, with $n$ being the number of nodes, was proposed.
Restricted agents were investigated in~\cite{AwBS99,AK,BRS2,DunKK01}. It
was assumed that the agent is a robot with either a restricted fuel
tank~\cite{AwBS99,BRS2}, forcing it to periodically return to the base
for refuelling, or that it is a tethered robot, i.e., attached to
the base by a ``rope'' or ``cable'' (a path from the original node)
of restricted length~\cite{DunKK01}. For example, in~\cite{DunKK01} it was
proved  that exploration can be done in time $O(m)$ under both
scenarios.

Exploration of anonymous graphs presents different types of challenges.
In this case, it is impossible to explore arbitrary graphs
and to stop after completing exploration if no marking of nodes is
allowed~\cite{Bu78}. Hence, the scenario adopted in~\cite{BenFRSV98,BenSl94}
was to allow {\em pebbles} which the agent can drop on nodes to
recognise already visited ones, and then remove them and drop in
other places. The authors concentrated attention on the minimum
number of pebbles allowing efficient exploration of arbitrary
directed $n$-node graphs. (In the case of undirected graphs, one
pebble suffices for efficient exploration.) In~\cite{BenSl94}, the
authors compared the exploration power of one agent with pebbles to
that of two cooperating agents without pebbles. In~\cite{BenFRSV98}, it
was shown that one pebble is enough, if the agent knows an upper
bound on the size of the graph, and $\Theta(\log \log n)$ pebbles
are necessary and sufficient otherwise.

In~\cite{CohFraIlcKorPel05,DiFKP02,FraIl04,FraIPPP04,FraIlcRajTix06},
the adopted measure of
efficiency was the memory size of the agent exploring anonymous
graphs. In~\cite{FraIl04,FraIlcRajTix06}, the agent was allowed to mark nodes by
pebbles, or even by writing messages on whiteboards with which nodes
are equipped. In~\cite{CohFraIlcKorPel05}, the authors studied special schemes
of labelling nodes, which facilitate exploration with small memory.
Another aspect of distributed graph exploration by robots with
bounded memory was studied in~\cite{DiFKP02,GPRZ07}, where the topology
of graphs is restricted to trees. In~\cite{DiFKP02} Diks {\em et al.}
proposed a robot requiring $O(\log^2 n)$ memory bits to explore any
tree with at most $n$ nodes. They also provided the lower bound
$\Omega(\log n)$ if the robot is expected to return to its original
position in the tree. Very recently the gap between the upper bound
and the lower bound was closed in~\cite{GPRZ07} by G\k asieniec {\em
et al.} who showed that $O(\log n)$ bits of memory suffice in tree
exploration with return. However it is known, see~\cite{FraIPPP04}, that in
arbitrary graphs the number of memory bits required by any robot
expected to return to the original position is $\Theta(D\log d),$
where $D$ is the diameter and $d$ is the maximum degree in the
graph. In comparison, Reingold~\cite{Rein} proved recently that $SL =
L$, i.e., any decision problem which can be solved by a
deterministic Turing machine using logarithmic memory (space) is
log-space reducible to the USTCON (st-connectivity in undirected
graphs) problem. This proves the existence of a robot equipped with
asymptotically optimal number of $O(\log n)$ bits being able to
explore any $n$-node graph in the perpetual exploration model, where
the return to the original position is not required. The respective
lower bound $\Omega(\log n)$ is provided in~\cite{Rol80}.

In this paper, we are interested in robots characterised by very low
memory utilisation. In fact, the robots are allowed to use only a
constant number of memory bits. This restriction permits modelling
robots as finite state automata. Budach~\cite{Bu78} proved that no
finite automaton can explore all graphs. Rollik~\cite{Rol80} showed
later that even a finite team of finite automata cannot explore all
planar cubic graphs. This result is improved in~\cite{CooRac80}, where
Cook and Rackoff introduce a powerful tool, called the {\em JAG},
for Jumping Automaton for Graphs. A JAG is a finite team of finite
automata that permanently cooperate and that can use {\em
teleportation} to move from their current location to the location
of any other automaton. However, even JAGs cannot explore all
graphs~\cite{CooRac80}.

\section{Preliminaries}

\subsection{Notation and basic definitions}

Let $G=(V,E)$ be a simple, connected, undirected graph.
We denote by $\overrightarrow{G}$ the symmetric directed graph obtained
from $G$ by replacing each undirected edge $\{u,v\}$ by two directed edges
in opposite directions -- the directed edge from $u$ to $v$ denoted by $(u,v)$
and the directed edge from $v$ to $u$ denoted by $(v,u)$. For each directed
edge  $(u,v)$ or $(v,u)$ we say that undirected edge $\{u,v\}\in G$ is
its {\em underlying} edge.
For any node $v$ of a directed graph the {\em out-degree} of $v$ is the number
of directed edges leaving $v$, the {\em in-degree} of $v$ is
the number of directed edges incoming to $v$, and {\em cumulative degree}
of $v$ is the sum of its out-degree and its in-degree.

Directed cycles constructed by our algorithm traverse some edges in $G$
once and some other edges twice in opposite directions.
However, at early stages, our algorithm for oblivious agents is solely
interested whether the edge is unidirectional or bidirectional,
indifferently of the direction. To alleviate the presentation (despite some
abuse of notation), in this context, an edge that is traversed once when
deprived of its direction we call a {\sl single edge}.
Similarly, an edge that is traversed twice is called a {\sl two-way edge},
and it is understood to be composed of two single edges (in opposite directions).
Hence we extend the notion of single and two-way edges to general directed graphs
in which the direction of edges is removed. In particular,
%
%
we say that two remote nodes $s$ and $t$ are connected by
a {\sl two-way path}, if there is
a finite sequence of vertices $v_1,v_2,\dots,v_k,$
where each pair $v_i$ and $v_{i+1}$ is connected by a two-way edge, and
$s=v_1$ and $t=v_k.$
We call a directed graph $\overrightarrow{K}$ {\em two-way connected} if for
any pair of nodes there is a two-way path connecting them.
Note that two-way connectivity implies strong connectivity but not the opposite.

\subsection{Three-layer partition\label{s:3l-partition}}


%
The three-layer partition is a new graph decomposition method that we
use in to efficiently construct periodic tours in both the oblivious and
the bounded-memory cases.

For any set of nodes $X$ we call the {\em neighborhood} of $X$ the set
of their neighbors in graph $G$ (excluding nodes in $X$) and we
denote it by $N_G(X)$.
One of the main components of the constructions of our technique
are {\em backbone trees} of $G$, that is connected cycle-free subgraphs of $G$.
We say that a node $v$ is {\em saturated} in a backbone tree $T$ of $G$ if all
edges incident to $v$ in $G$ are also present in $T$.

A {\em three-layer partition} of a graph $G=(V,E)$ is a
4-tuple $(X,Y,Z,T_B)$ such that
(1) the three sets $X$, $Y$ and $Z$ form a partition of $V$,
(2) $Y=N_G(X)$ and $Z=N_G(Y)\setminus X$,
(3) $T_B$ is a tree of node-set $X\cup Y$ where all nodes in $X$ are saturated.
We call $X$ the {\sl top layer}, $Y$ the {\sl middle layer},
and $Z$ the {\sl bottom layer} of the partition. Any edge of $G$
between two nodes in $Y$ will be called {\sl horizontal}.

During execution of procedure {\sc 3L-Partition} the nodes in $V$ are dynamically
partitioned into sets $X,Y,Z,P$ and $R$ with temporary contents, where $X$ is the set of
saturated nodes, $Y=N_G(X)$ contains nodes at distance 1 from $X$,
$Z=N_G(Y)\setminus X$ contains
nodes at distance 2 from $X$, $P=N_G(Z)\setminus Y$ contains nodes
at distance 3
from $X$ and $R=V\setminus (X\cup Y\cup Z\cup P)$ contains all the remaining nodes in~$V$.
%
%
\vspace*{-0.2cm}
\begin{tabbing}
xxx\=xxx\=xxx\=xxx\=xxx\=xxx\kill\\
{\bf Procedure} {\sc 3L-Partition}$(in: G=(V,E); out: X,Y,Z,T_{B});$\\
(1) \> $X=Y=Z=P=\emptyset;$ $R=V;$ $T_{B}=\emptyset;$\\
(2) \> select an arbitrary node $v\in R;$\\
(3) \> {\bf loop}\\
\> (a) $X=X\cup\{v\};$ (insert into $X$ newly selected node);\\
\> (b) update contents of sets $Y,Z,P$ and $R$ (on the basis of new $X$);\\
\> (c) saturate the newly inserted node $v$ to $X$ (i.e., insert
           all new edges to $T_{B}$);\\
\> (d) {\bf if} the new node $v$ in $X$ was selected from $P$ {\bf then}
           insert to $T_{B}$ an arbitrary horizontal edge\\
\>\>\> (on middle level) to connect the newly formed star rooted
           in $v$ with the rest of $T_{B}.$\\
\> (e) {\bf if} any new node $v\in Y$ can be saturated {\bf then}
           select $v$ for saturation;\\
\>\>\> {\bf else-if} any new node $v\in Z$ can be saturated {\bf then}
           select $v$ for saturation;\\
\>\>\>\> {\bf else-if} $P$ is non-empty {\bf then} select a new $v$
           from $P$ for saturation arbitrarily;\\
\>\>\>\>\> {\bf else} exit-loop;\\
\> {\bf end-loop}\\
(4) \> {\bf output} $(X,Y,Z,T_{B})$
\end{tabbing}

\vspace*{-0.3cm}
\begin{figure}[htbp] 
   \centering
   \includegraphics[width=4in]{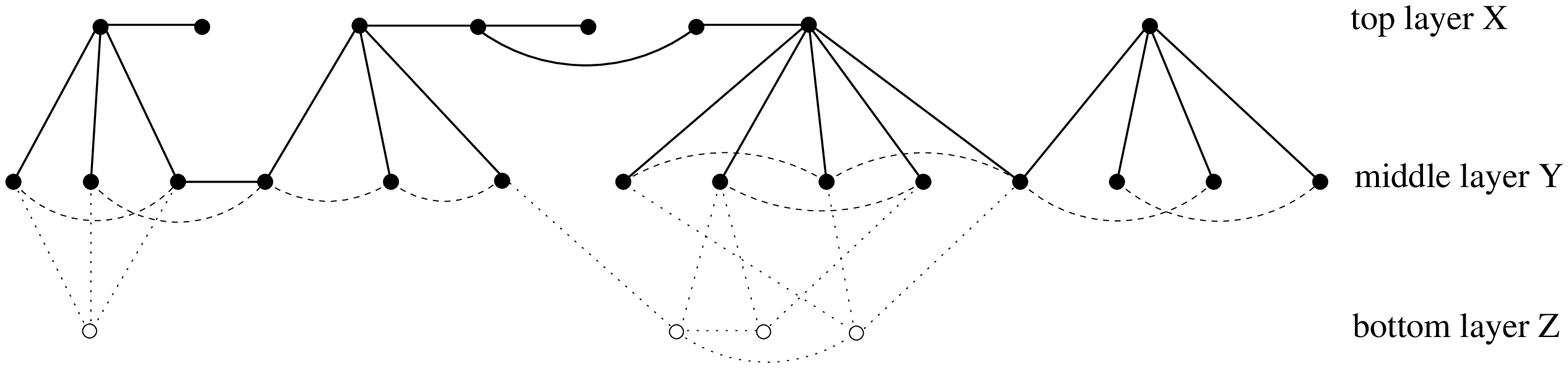}
   \caption{Three-layer partition. Solid lines and black nodes belong
   to the backbone tree $T_B$.
   Dashed lines represent horizontal edges outside $T_B$. Dotted lines
   are incident to nodes from $Z.$}
   \label{fig:3l-partition}
\end{figure}

\begin{lemma}\label{lem:algo}
Procedure {\sc 3L-Partition} computes a three-layer
partition for any connected graph $G.$
\end{lemma}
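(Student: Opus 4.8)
The plan is to track a loop invariant through {\sc 3L-Partition} and then read the three defining properties of a three-layer partition off its final state. Termination is immediate: each pass of the loop inserts into $X$ (step~(a)) a node $v$ drawn from $R$, $Y$, $Z$ or $P$, and each of these is disjoint from $X$, so $|X|$ strictly increases; since $X\subseteq V$ and $V$ is finite, the loop halts.

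The invariant I would maintain after each iteration is the conjunction of: (i) $Y=N_G(X)$, $Z=N_G(Y)\setminus X$, $P=N_G(Z)\setminus Y$ and $R=V\setminus(X\cup Y\cup Z\cup P)$ --- which, using only the ``excluding nodes in $X$'' convention built into $N_G$, also forces $X,Y,Z,P,R$ to be pairwise disjoint; (ii) every node of $X$ is saturated in $T_B$; and (iii) $T_B$ is a tree with node-set exactly $X\cup Y$. Item~(i) holds by construction (step~(b)); item~(ii) holds because a node is saturated the instant it enters $X$ (step~(c)) and $T_B$ never loses an edge. The real work is item~(iii): after iteration~1, $T_B$ is the star at the initial node and (iii) holds, so I only need the inductive step, reading ``$v$ can be saturated'' as ``$v$ has exactly one neighbour in the current node-set $X\cup Y$ of $T_B$'' --- precisely the condition under which throwing $v$'s whole star in $G$ into $T_B$ keeps it acyclic.

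The case analysis on where step~(e) took $v$ from is the technical heart. If $v\in Y$ or $v\in Z$, the unique $T_B$-neighbour of $v$ lies in $X$ (for $v\in Y$ because $v\in N_G(X)$ has a neighbour in $X$ anyway; for $v\in Z$ because a neighbour in $X$ would force $v\in Y$, leaving $Y$ as the only option), every other neighbour of $v$ lies outside $X\cup Y$ and, after step~(b), becomes a new element of $Y$; so $v$'s star attaches to the old tree at a single vertex --- via an edge already in $T_B$ if $v\in Y$, via an edge added in step~(c) if $v\in Z$ --- and $T_B$ stays a tree spanning $X\cup Y$. If $v\in P$, then $v$ has no neighbour in $X$ (else $v\in Y$) and none in $Y$ (else $v\in Z$), so its star in $G$ is vertex-disjoint from the old $T_B$ and adding it produces a two-component forest; step~(d) must then find a reconnecting horizontal edge, and one exists: $v\in P=N_G(Z)\setminus Y$ yields a neighbour $z\in Z$, $z\in Z=N_G(Y)\setminus X$ yields a neighbour $y'\in Y$, and after step~(b) both $z$ and $y'$ lie in $Y$ with $z$ in the new star and $y'$ in the old component, so $\{z,y'\}$ reconnects the forest into a tree without creating a cycle. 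Hence (iii) is preserved.

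Finally, at termination the loop has left through the last branch of step~(e), so $P=\emptyset$; I then get $R=\emptyset$ from the layered facts ``a neighbour of an $X$-node lies in $X\cup Y$'', ``a neighbour of a $Y$-node lies in $X\cup Y\cup Z$'', ``a neighbour of a $Z$-node lies in $X\cup Y\cup Z\cup P$'' (all immediate from~(i)): with $P=\emptyset$ the set $X\cup Y\cup Z$ has no edge leaving it, hence is a union of connected components of $G$, and since $G$ is connected and $X\neq\emptyset$ we get $X\cup Y\cup Z=V$, i.e.\ $R=\emptyset$. Now (i) together with $P=R=\emptyset$ gives property~(1) of a three-layer partition, (i) gives property~(2), and (ii)--(iii) give property~(3). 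I expect the main obstacle to be the case analysis for invariant~(iii) --- especially the existence of the reconnecting horizontal edge when $v\in P$ --- with a secondary need for care in the set-difference bookkeeping among $Y$, $Z$ and $P$.
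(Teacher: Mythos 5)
Your proof follows essentially the same route as the paper's: the same loop invariant (that $T_B$ is a tree spanning exactly $X\cup Y$ with every node of $X$ saturated), the same case analysis on whether the newly saturated node comes from $Y$, $Z$ or $P$, and the same reconnection argument via a horizontal edge that migrates from the $Y$--$Z$ boundary into $Y$ when a $P$-node is saturated. You are in fact more careful than the paper in two places it glosses over, namely termination and the deduction $R=\emptyset$ at exit from $P=\emptyset$ together with connectivity of $G$. One claim in your case analysis is, however, false as stated: a node $v\in Y$ selected for saturation need not have a unique neighbour in $X\cup Y$. It always has a neighbour in $X$, but it may have several nodes of $X\cup Y$ adjacent to it via edges already present in $T_B$ --- for instance, a former $Z$-node promoted to $Y$ by the saturation of a $P$-node carries both an $X$-neighbour and the horizontal connector edge inserted in step~(d); likewise a $Y$-node acquires a second $X$-neighbour whenever one of its $Z$-neighbours gets saturated. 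The procedure's actual criterion for saturating $v\in Y$ is that $v$ has no horizontal edge \emph{outside} $T_B$, i.e.\ that every edge at $v$ not already in $T_B$ leads into $Z$; with that reading your acyclicity argument goes through unchanged, since the additional $X\cup Y$-neighbours contribute no new edges. This is a one-line patch rather than a structural flaw; everything else, including the existence and validity of the reconnecting edge $\{z,y'\}$ in the $P$-case, checks out.
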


\begin{lemma}\label{lm:3-conditions}
The three-layer partition has the following properties:
\begin{description}
%
%
\vspace{-0.2cm}
\item[(1)] each node in $Y$ has an incident {\sl horizontal} edge outside of $T_{B}$;
\vspace{-0.2cm}
\item[(2)] each node in $Z$ has at least two neighbors in $Y$.
\end{description}
\end{lemma}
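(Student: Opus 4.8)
The plan is to prove each of the two properties by analyzing the state of the sets at the moment {\sc 3L-Partition} terminates, using the loop structure to argue that the termination condition forces exactly what we need. Throughout, I will use the fact (from Lemma~\ref{lem:algo}) that the output $(X,Y,Z,T_B)$ is indeed a three-layer partition, so $Y=N_G(X)$, $Z=N_G(Y)\setminus X$, every node of $X$ is saturated in $T_B$, and $V=X\cup Y\cup Z$ (i.e.\ $P=R=\emptyset$ at the end, since otherwise the loop would not have exited).

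For property~(1), consider any node $v\in Y$ in the final partition. Since $v$ was never moved into $X$, it was never saturated, so at termination there is at least one edge of $G$ incident to $v$ that is \emph{not} in $T_B$. I claim the other endpoint $w$ of such an edge lies in $Y$. If $w\in X$ then, because $w$ is saturated, the edge $\{v,w\}$ would be in $T_B$ — contradiction. If $w\in Z$, then $w$ has a neighbor in $X$ (being in $Y=N_G(X)$ is false for $w$, but wait — $w\in Z$ means $w\in N_G(Y)\setminus X$, so $w\notin Y$, hence $w$ has no neighbor in $X$); but then $v$, a neighbor of $w$, would have to be examined — more carefully: a node of $Z$ has all its neighbors in $Y\cup Z$, so $\{v,w\}$ with $w\in Z$ is fine topologically, and I must instead rule it out using the saturation priority in step~(e). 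The cleaner argument: when $v$ was first placed into $Y$ (it entered $Y$ because some neighbor entered $X$), step~(e) gives $Y$-nodes top priority for saturation; since $v$ was never chosen, at every subsequent iteration $v$ was \emph{not} saturable, meaning some incident edge of $v$ led to a node not yet in $X\cup Y$. At termination $Z\ne\emptyset$ is possible, but any non-$T_B$ edge at $v$ going to $Z$ would mean that $Z$-node was saturable-blocked too; I will chase this to conclude the surviving non-$T_B$ edge must be horizontal, i.e.\ go to another $Y$-node. I expect this horizontal-edge argument — showing the leftover edge cannot "escape downward" into $Z$ — to be the main obstacle, and the key lever is that $T_B$ is a tree on $X\cup Y$, so $v$ has at most one $T_B$-edge to the $X$-side but may have $T_B$-edges to $Y$; one must count degrees against the tree structure and the saturation order.

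For property~(2), let $w\in Z$ in the final partition. Since $w\in Z = N_G(Y)\setminus X$, $w$ has at least one neighbor in $Y$. Suppose for contradiction it has exactly one, say $u\in Y$. Then in step~(e)'s second clause, $w$ would be a node of $Z$ that is "saturable" once $u$ is in $X$: saturating $w$ means adding all its incident edges to $T_B$, and since $w$'s only neighbor already in $X\cup Y$ is $u$, bringing $w$ into $X$ creates a star whose only attachment to the existing tree is through $u\in Y$ — consistent with step~(d) only for nodes selected from $P$, not $Z$. The precise contradiction: when a $Z$-node is selected for saturation in step~(e), it is attached to $T_B$ through one of its $Y$-neighbors at no extra cost (the edge to that $Y$-neighbor is among the "new edges to $T_B$" in step~(c)), and this requires no horizontal connector; so a $Z$-node with a single $Y$-neighbor would have been a valid saturation candidate, hence the loop would have selected it rather than exiting — contradicting that $w$ is still in $Z$ at termination. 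Thus $w$ has at least two neighbors in $Y$. The one subtlety to nail down here is why having a unique $Y$-neighbor makes $w$ "saturable" in the algorithm's sense (it is saturable precisely when saturating it does not disconnect the tree, which holds automatically through its lone $Y$-link), and I would state that as a small intermediate observation before deriving the contradiction.
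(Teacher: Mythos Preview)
Your plan for property~(2) is correct and is essentially the paper's argument: if $w\in Z$ had at most one neighbor in $Y$, then saturating $w$ would attach a star to $T_B$ at a single point (the lone $Y$-neighbor), create no cycle, and hence $w$ would have been selected in step~(e); the loop could not have exited with $w$ still in $Z$.

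For property~(1), however, there is a genuine gap. You produce \emph{some} edge $e\notin T_B$ at $v\in Y$ and then try to argue that the other endpoint of \emph{that particular} edge lies in $Y$. This is false in general: a $Y$-node can perfectly well have non-$T_B$ edges going down into $Z$ (indeed property~(2) forces each $Z$-node to receive at least two such edges from $Y$). So the ``main obstacle'' you name---showing the leftover edge cannot escape to $Z$---is not an obstacle to overcome but an impossibility; you are trying to prove a statement stronger than~(1) that does not hold.

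The root of the problem is that you invert the meaning of ``not saturable''. You write that $v$ not being saturable means ``some incident edge of $v$ led to a node not yet in $X\cup Y$''. It means the opposite: $v$ fails to be saturable precisely when adding its remaining edges to $T_B$ would create a cycle, i.e.\ when some non-$T_B$ edge at $v$ has \emph{both} endpoints already in the tree $T_B$. Since $T_B$ spans $X\cup Y$ and every edge from $v$ to $X$ is already in $T_B$ (by saturation of $X$-nodes), this says exactly that $v$ has a horizontal edge outside $T_B$---which is property~(1), obtained in one line once ``not saturable'' is read correctly. The paper's proof is exactly this contrapositive: assume $v$ has no horizontal edge outside $T_B$; then all non-$T_B$ edges at $v$ go to $Z$; adding them creates no cycle; so $v$ is saturable, contradicting termination with $v\in Y$.
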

\begin{proof}
%
To prove property (1) assume, by contradiction, that there exists
a node $u\in Y$ that has no horizontal edges outside of $T_B.$ Observe that
in this case $u$ can be saturated , i.e., $u$ may be moved to $X,$
inserting into $T_B$ all remaining
edges incident to $u.$ Indeed, since before $u$ was saturated all such edges lead only
to nodes in $Z$ their insertion does not form cycles. Thus property (1) holds.
Finally, assume there is a node $w$ in $Z$ with no more than one incident edge leading
to level $Y.$ Also in this case we can saturate $w$ since all edges incident to $w$
form a star that shares at most one node with $T_B.$ Thus, no cycle is
created, which in turn proves property (2). \qed
\end{proof}

%



\begin{lemma}\label{lm:3layer-complexity}
For any graph $G=(V,E)$ a three-layer partition may be computed in $O(|E|)$ time.
\end{lemma}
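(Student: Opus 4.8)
The plan is to show that procedure {\sc 3L-Partition} can be implemented so that its total running time is linear in the number of edges. First I would observe that the outer loop runs at most $|V|$ times, since each iteration moves one node permanently into $X$ and nodes are never removed from $X$; hence the structural bookkeeping must be charged carefully rather than bounded iteration-by-iteration. The key idea is to maintain, for every node $w \notin X$, a counter $c(w)$ equal to the number of its $G$-neighbours currently in $X$, together with its current layer-label (one of $Y,Z,P,R$). When a node $v$ is absorbed into $X$ in step (a), I scan the adjacency list of $v$ exactly once: for each neighbour $w$ I increment $c(w)$ and, if this is the first time $c(w)$ becomes positive, I promote $w$ from its old layer toward $Y$; a second, analogous pass propagates the $Z$ and $P$ labels one extra hop when needed. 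Since each undirected edge $\{v,w\}$ is scanned only when $v$ enters $X$, and each node enters $X$ at most once, the total work over all iterations of step (b) is $O(\sum_{v}\deg_G(v)) = O(|E|)$.

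Next I would handle the saturation/tree-update steps (c) and (d). Saturating $v$ means adding to $T_B$ exactly those edges incident to $v$ not already in $T_B$; by Lemma \ref{lm:3-conditions}-style reasoning the edges being added never create a cycle, so no expensive cycle-check is needed — it suffices to mark edges as ``in $T_B$'' and again this costs $O(\deg_G(v))$ per absorbed node, totalling $O(|E|)$. For step (d), when $v$ was drawn from $P$ we must find one horizontal edge joining $v$'s new star to the existing backbone; this can be located during the same adjacency scan of the $Y$-neighbours of $v$ (the middle-layer nodes already carried by $T_B$), so it adds only constant overhead per such event.

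Finally I would argue that the selection rule in step (e) can be supported in $O(1)$ amortised time per iteration. The idea is to keep three FIFO queues, one listing currently-saturable nodes of $Y$, one of $Z$, and one holding all of $P$; a node is enqueued when the counter updates in step (b) first make it eligible, and membership is tested/removed lazily when it is popped. Because each node is enqueued at most a constant number of times (once it leaves a layer it never returns), the amortised cost of maintaining and querying these queues is $O(1)$ per loop iteration, hence $O(|V|)$ overall. Adding the initialisation cost $O(|V|+|E|)$ of step (1) and the output step (4), the whole procedure runs in $O(|E|)$ time (using $|V| = O(|E|)$ for connected $G$). The main obstacle I anticipate is not any single step but the accounting: one must make sure that every node changes layer only a bounded number of times and that each edge is touched only a bounded number of times, so that the naive ``rescan everything after each absorption'' blow-up is avoided; getting the amortisation argument airtight is the crux of the proof.
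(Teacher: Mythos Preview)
Your plan is correct and would yield an $O(|E|)$ bound, but the paper's own argument is shorter and organised differently, so it is worth pointing out the contrast.

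The paper does not maintain counters or queues at all. Instead it colours each node \emph{green} initially and \emph{red} once it has been tested for saturation (whether the test succeeds or not). In step~(e) one simply scans the current green nodes of $Y$, then $Z$, then $P$, testing each by two passes over its adjacency list: the first pass decides saturability (no $Y$-neighbour if $v\in Y$; exactly one $Y$-neighbour if $v\in Z$; always if $v\in P$), the second pass promotes its neighbours to the appropriate higher layer. Either way the node is now red and is never examined again. Since each node is processed once at cost $O(d_v)$, the total is $\sum_v O(d_v)=O(|E|)$. The crucial (and in the paper implicit) fact is monotonicity: a node that fails its saturation test can never become saturable later, so skipping red nodes loses nothing.

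Your version charges work to layer-entry events and maintains eligibility queues; this is a perfectly valid alternative, and your observation that layers are only entered once is exactly the monotonicity the paper uses in disguise. Two small points to tighten: (i) the counter $c(w)$ you introduce (number of $X$-neighbours) does not by itself decide saturability in $Y$ or $Z$; you also need the number of $Y$-neighbours, which your ``second, analogous pass'' can supply but which you should make explicit. (ii) Your claim that a node is enqueued a bounded number of times really rests on the same monotonicity fact --- that a node, once non-saturable in its current layer, stays non-saturable --- and this deserves one sentence of justification. With those two clarifications your argument goes through; the paper's red/green trick simply sidesteps the queue machinery by paying $O(d_v)$ up front at the single moment each node is examined.
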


\subsection{RH-traversability and witness cycles}
In this section we discuss the conditions for the oblivious periodic
traversals. Given a port number assignment algorithm and an agent
algorithm, it is possible, for a given degree $d$, to permute all port
numbers incident to each degree-$d$ node of a graph $G$ according to
some fixed permutation $\sigma$, and to modify the transition function
$f$ of the agent accordingly, so that the agent behaves exactly the
same as before in $G$. The new transition function $f'$ is in this
case given by the formula $f'=\sigma \circ f \circ \sigma^{-1}$ and
the two agent algorithms are said to be equivalent.

More precisely, two agent algorithms described by their respective
transition functions $f$ and $f'$ are {\em equivalent} if for any
$d>0$ there exists a permutation $\sigma$ on $\{1,\dots,d\}$ such that
$f'=\sigma \circ f \circ \sigma^{-1}$.

The most common algorithm used for oblivious agents is the
Right-Hand-on-the-Wall algorithm. This algorithm is specified by the
transition function $f:(s,i,d)\mapsto (s,(i \bmod d) +1)$. Differently
speaking, if the agent enters a degree-$d$ node $v$ by port number
$i$, it will exit $v$ through port number $(i \bmod d) +1$.

The following lemma states that any couple consisting of a port number
assignment algorithm and an oblivious agent algorithm, and solving the
periodic graph exploration problem, can be expressed by using the
Right-Hand-on-the-Wall algorithm as the agent algorithm. We will thus
focus on this algorithm in all subsequent parts referring to oblivious
agents.

\begin{lemma}\label{lem:only-RH}
  Any agent algorithm enabling an oblivious agent to explore all
  graphs (even all stars) is equivalent to the Right-Hand-on-the-Wall
  algorithm.
\end{lemma}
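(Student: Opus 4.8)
The plan is to show that any transition function $f$ for an oblivious (single-state) agent that explores all stars must act, at each degree $d$, as a $d$-cycle on the port numbers $\{1,\dots,d\}$, and that all such $d$-cycles are conjugate to the cyclic shift $i \mapsto (i \bmod d)+1$. First I would fix $d>0$ and consider the star $S_d$ with one center of degree $d$ and $d$ leaves of degree $1$. Since the agent is oblivious, its behavior at the center is entirely governed by the map $g:i \mapsto f(\cdot,i,d)$ (dropping the vacuous state component), and its behavior at each leaf is forced: entering a leaf by port $1$, it must exit by port $1$ back to the center. So starting from the center having just arrived via port $i$, the agent goes out port $g(i)$, bounces off a leaf, returns to the center via port $g(i)$, and then leaves via port $g(g(i))$; more precisely, re-entering the center via port $j$ it leaves via port $g(j)$. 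Thus the sequence of ports through which the agent enters the center is the orbit of $i$ under iteration of $g$.

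Next I would argue that $g$ must be a permutation of $\{1,\dots,d\}$ that is a single $d$-cycle. If $g$ is not injective, or if it is a permutation but has more than one cycle, then the forward orbit of a starting port $i$ under $g$ misses some port $j$; the corresponding leaf (the one attached via port $j$ at the center) is then never visited, contradicting exploration of $S_d$. Hence $g$ is a bijection whose functional graph is a single cycle of length $d$ — equivalently, $g$ is a $d$-cycle in the symmetric group $\mathcal S_d$. Conversely, the Right-Hand-on-the-Wall function at degree $d$ is the particular $d$-cycle $c_d: i \mapsto (i \bmod d)+1$.

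The final step is to invoke the standard fact that any two $d$-cycles in $\mathcal S_d$ are conjugate: if $g = (a_1\, a_2\, \cdots\, a_d)$ and $c_d = (1\, 2\, \cdots\, d)$, then the permutation $\sigma$ defined by $\sigma(a_k) = k$ satisfies $\sigma \circ g \circ \sigma^{-1} = c_d$. Since the agent is oblivious this is exactly the condition, from the definition of equivalence given just above the lemma, for the algorithm with transition function $f$ to be equivalent to Right-Hand-on-the-Wall: for every $d>0$ there is a permutation $\sigma$ on $\{1,\dots,d\}$ with $f' = \sigma\circ f\circ\sigma^{-1}$, where $f'$ is the RH function. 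This handles all degrees uniformly (the permutations at different degrees are chosen independently, which is permitted), so $f$ is equivalent to Right-Hand-on-the-Wall, as claimed.

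The only subtle point — and the place I would be most careful — is justifying that it suffices to test the agent on stars, i.e. that failure of $g$ to be a single $d$-cycle at some degree $d$ is genuinely witnessed by a star rather than being ``rescued'' in larger graphs. This is immediate here because the hypothesis explicitly grants that the algorithm explores \emph{all} graphs including all stars, so a star on which the agent fails is a direct contradiction; the parenthetical ``even all stars'' in the statement is doing precisely this work. The remaining details — that the leaf transition is forced, and the conjugacy computation — are routine.
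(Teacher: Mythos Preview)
Your proposal is correct and follows essentially the same approach as the paper: both use the star $S_d$ to force the degree-$d$ transition map to be a single $d$-cycle, and then invoke conjugacy of $d$-cycles to conclude equivalence with Right-Hand-on-the-Wall. Your version is slightly more explicit (you spell out the leaf behavior and the conjugating permutation $\sigma$), while the paper phrases the two obstructions as ``not surjective'' and ``more than one cycle'' and starts the agent at a leaf rather than analyzing orbits from the center, but these are presentational differences only.
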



Graph traversal according to the Right-Hand-on-the-Wall
  algorithm has been called {\em right-hand traversals} or shortly {\em RH-traversals},
see~\cite{DJSS05}. Similarly, cyclic paths formed in the graph according
to the right-hand rule are called {\em RH-cycles}.
The aim of our first oblivious-case algorithm is to find a short RH-traversal of the graph, i.e., to find a cycle $\overrightarrow{C}$ in
$\overrightarrow{G}$ containing all nodes of $\overrightarrow{G}$ and satisfying the
right-hand rule: If $e_1=(u,v)$ and $e_2=(v,w)$ are two
successive edges of $\overrightarrow{C}$ then $e_2$ is the successor
of $e_1$ in the port numbering of $v$. We call such a
cycle a {\em witness cycle} for $G$, and the corresponding port
numbering a {\em witness port numbering}.

Given graph $\overrightarrow{G}$ we first design $\overrightarrow{H}$, a
spanning subgraph of $\overrightarrow{G}$ that contains all edges of
a short witness cycle $\overrightarrow{C}$ of $\overrightarrow{G}$.
Then we look for port numbering of each node in
$\overrightarrow{H}$ to obtain $\overrightarrow{C}$. The
characterisation of such a graph $\overrightarrow{H}$ is not trivial,
however it is easy to characterise
graphs which are unions of RH-cycles.

\begin{definition} A node $v \in \overrightarrow{G}$ is {\em RH-traversable}
in $\overrightarrow{H}$ if there exists a port numbering $\pi_v$
such that, for
each edge $(u,v) \in \overrightarrow{H}$ incoming to $v$ via an underlying
edge $e$ there exists an outgoing edge $(v,w) \in \overrightarrow{H}$ leaving~$v$ via
the underlying edge $e'$, such that $e'$ is the successor of $e$ in
the port numbering of~$v$.

We call such ordering a {\em witness ordering} for $v$.
\end{definition}

Let $\overrightarrow{H}$ be a spanning subgraph of $\overrightarrow{G}$. For each node $v$,
denote by $b_v$, $i_v$ and $o_v$ the number of two-way edges incident
to $v$ used in $\overrightarrow{H},$ only incoming and only outgoing edges,
respectively.
The following lemma characterises the nodes of a graph being an union of RH-cycles.

\begin{lemma} \label{lm:RHchar}
A node $v$ is RH-traversable if and only if $b_v=d_v$ or $i_v=o_v>0$.
\end{lemma}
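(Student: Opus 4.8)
The plan is to restate the condition combinatorially. A port numbering $\pi_v$ of the $d_v$ edges of $G$ incident to $v$ is exactly a cyclic ordering $e_1,\dots,e_{d_v}$ of those edges, in which $\suc(e_j)=e_{(j\bmod d_v)+1}$. Colour each incident edge according to which of its two orientations lie in $\overrightarrow{H}$: a \emph{two-way} edge if both do ($b_v$ of these), an \emph{only-incoming} edge if only the one oriented into $v$ does ($i_v$ of these), an \emph{only-outgoing} edge ($o_v$ of these), or an \emph{unused} edge if neither does ($n_v:=d_v-b_v-i_v-o_v$ of these). Call an incident edge an \emph{in-edge} if it is two-way or only-incoming, and an \emph{out-edge} if it is two-way or only-outgoing; thus there are $b_v+i_v$ in-edges and $b_v+o_v$ out-edges. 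Unwinding the definition, $v$ is RH-traversable iff some cyclic ordering of its incident edges has the property that the successor of every in-edge is an out-edge; equivalently, no in-edge is immediately followed by an only-incoming or an unused edge.

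For the ``if'' direction I would argue as follows. If $b_v=d_v$, every incident edge is simultaneously an in-edge and an out-edge, so every port numbering is a witness ordering. If $i_v=o_v=k>0$, write $O_1,\dots,O_k$ for the only-outgoing edges, $I_1,\dots,I_k$ for the only-incoming ones, $N_1,\dots,N_{n_v}$ for the unused ones and $B_1,\dots,B_{b_v}$ for the two-way ones, and take the cyclic ordering
\[
O_1,\;N_1,\dots,N_{n_v},\;I_1,\;O_2,I_2,O_3,I_3,\dots,O_k,I_k,\;B_1,\dots,B_{b_v}
\]
(returning to $O_1$ after $B_{b_v}$). The only in-edges occurring are the $I_j$ and the $B_j$, and each is immediately followed by an out-edge (the next $O$-edge, the next $B$-edge, or $O_1$ on the wrap-around), while every other edge is only-outgoing or unused and so imposes no constraint on its successor. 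A short case check confirms that this still works when $b_v=0$, when $n_v=0$, and when $k=1$; here one uses that $o_v=k\ge1$ guarantees that $O_1$ exists, and the counts $k,k,n_v,b_v$ sum to $d_v$.

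For the ``only if'' direction, suppose a witness ordering of the incident edges exists. Since $\suc(\cdot)$ is a bijection on the $d_v$ incident edges and maps the $b_v+i_v$ in-edges injectively into the set of $b_v+o_v$ out-edges, we get $i_v\le o_v$. The reverse inequality — and hence $i_v=o_v$ — follows from $\overrightarrow{H}$ being balanced at $v$ (in the setting where the lemma is applied $\overrightarrow{H}$ is a union of directed cycles, so the in- and out-degrees of $v$ coincide, i.e.\ $b_v+i_v=b_v+o_v$); equivalently one combines the global identity $\sum_w i_w=\sum_w o_w$ with $i_w\le o_w$ at every node. It remains to rule out $i_v=o_v=0$ when $b_v<d_v$: then every in-edge and every out-edge is two-way while some incident edge is unused, so in any cyclic ordering a two-way edge (if one exists) is immediately followed by a non-two-way edge, which is not an out-edge — contradicting the witness property; and if no two-way edge exists then $v$ is isolated in $\overrightarrow{H}$, a case that does not occur when the tour must visit $v$. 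Hence $i_v=o_v>0$, as claimed.

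I expect the ``only if'' direction to be the main obstacle: the clean counting argument yields only the inequality $i_v\le o_v$, and upgrading it to the equality $i_v=o_v$ forces one to invoke the balancedness of $\overrightarrow{H}$ (or to reason globally over all nodes). The rest — notably verifying the explicit cyclic ordering in the ``if'' direction at its wrap-around and in the small parameter cases — is routine bookkeeping.
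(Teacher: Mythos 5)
Your proof is correct, and its constructive half follows the same route as the paper's: the paper also establishes the backward implication by an explicit block ordering (the two-way edges in one block followed by an outgoing edge, the remaining single edges in a second block with alternating directions whose last, incoming, edge precedes the two-way block). Your ordering $O_1,N_1,\dots,N_{n_v},I_1,O_2,I_2,\dots,O_k,I_k,B_1,\dots,B_{b_v}$ is the same idea, with the added merit of saying explicitly where the unused edges ($n_v$ of them) are placed, a point the paper leaves implicit. Where you genuinely add something is in the forward direction. The paper disposes of it in one line (``the definition of RH-traversability implies $i_v=o_v$''), but, as you correctly observe, the definition only makes the successor map an injection from in-edges into out-edges and hence yields $i_v\le o_v$: a node with one only-incoming and two only-outgoing underlying edges satisfies the definition literally while having $i_v<o_v$. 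Your repair --- invoking balancedness of $\overrightarrow{H}$ at $v$, or summing the pointwise inequality against $\sum_w i_w=\sum_w o_w$ (note the latter needs RH-traversability at \emph{every} node, not just at $v$) --- is exactly what is needed, and it is consistent with how the lemma is used in the paper, where the single edges of $\overrightarrow{K}$ are always oriented into consistent cycles. Likewise, your argument for the positivity clause (when $0<b_v<d_v$ and $i_v=o_v=0$, some two-way edge would be succeeded by an unused edge) and your handling of the degenerate isolated node are absent from the paper's proof and worth keeping.
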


\begin{proof}
$(\Rightarrow)$ The definition of RH-traversability implies $i_v=o_v$.

\noindent
$(\Leftarrow)$ If $b_v=d_v,$ i.e., all edges incident to
$v$ are used in two directions, any ordering of the edges is acceptable.
Otherwise ($b_v\ne d_v,$), choose a port numbering in which outgoing edges that contribute
to two-way edges are arranged in one block followed by an outgoing edge.
All remaining directed edges are placed in a separate block, in which
edges alternate directions and the last (incoming) edge precedes the block
of all two-way edges. \qed
\end{proof}

\vspace*{-0.15cm}
We easily obtain the following

\begin{corollary} \label{cor:RHchar}
A spanning subgraph $\overrightarrow{H}$ of $\overrightarrow{G}$ is a union of RH-cycles
if and only if each node $v$ has an even number of single edges
incident to $v$ in $\overrightarrow{H}$, and, in case no single
edge is incident to $v$ in $\overrightarrow{H}$, all two-way edges
incident to $v$ in $\overrightarrow{G}$ must be also present in $\overrightarrow{H}$.
\end{corollary}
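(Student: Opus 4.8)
The plan is to derive Corollary~\ref{cor:RHchar} directly from Lemma~\ref{lm:RHchar} by translating the per-node condition ``$b_v = d_v$ or $i_v = o_v > 0$'' into the language of single and two-way edges. First I would recall the basic bookkeeping at a node $v$: the edges of $\overrightarrow{G}$ incident to $v$ (after forgetting directions) that are used in $\overrightarrow{H}$ split into $b_v$ two-way edges and some number of single edges, where each single edge is either purely incoming or purely outgoing; thus the number of single edges incident to $v$ in $\overrightarrow{H}$ equals $i_v + o_v$. Since $\overrightarrow{H}$ is a union of RH-cycles precisely when every node of $\overrightarrow{H}$ is RH-traversable (this is the content of saying $\overrightarrow{H}$ decomposes into RH-cycles, each node lying on the cycles passing through it), it suffices to check that, node by node, the two characterisations agree.

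Next I would handle the two cases of Lemma~\ref{lm:RHchar} separately. In the case $i_v = o_v > 0$: the number of single edges at $v$ is $i_v + o_v = 2 i_v$, which is a positive even number, so $v$ has a nonzero even number of single edges incident to it, and the clause about two-way edges in the corollary is vacuous. Conversely, if $v$ has a nonzero even number $2k$ of single edges, then since on any closed walk the in-degree and out-degree contributed to $v$ must balance (the walk enters $v$ as often as it leaves), we get $i_v = o_v = k > 0$. In the case $b_v = d_v$: every edge incident to $v$ in $G$ is used as a two-way edge in $\overrightarrow{H}$, so there are no single edges at $v$ (an even number, namely zero), and moreover all two-way edges incident to $v$ in $\overrightarrow{G}$ are present in $\overrightarrow{H}$ — which is exactly the extra clause of the corollary triggered when no single edge is incident to $v$. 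Conversely, if no single edge is incident to $v$ in $\overrightarrow{H}$ and all two-way edges of $\overrightarrow{G}$ at $v$ are in $\overrightarrow{H}$, then every one of the $d_v$ edges at $v$ is a two-way edge in $\overrightarrow{H}$, i.e. $b_v = d_v$.

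Combining the two cases, for each node $v$ the condition of Lemma~\ref{lm:RHchar} holds if and only if $v$ has an even number of incident single edges in $\overrightarrow{H}$ and, when that number is zero, all two-way edges of $\overrightarrow{G}$ at $v$ are in $\overrightarrow{H}$. Quantifying over all $v$ gives the corollary, since a spanning subgraph is a union of RH-cycles exactly when all its nodes are RH-traversable.

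The only subtle point — and the step I would be most careful about — is justifying the ``parity forces $i_v = o_v$'' direction and, dually, that ``$\overrightarrow{H}$ is a union of RH-cycles'' really is equivalent to ``every node is RH-traversable'' rather than merely implied by it. For the forward direction one uses that in any union of directed cycles covering all edges, each vertex has equal in- and out-degree, so an even single-edge count splits evenly; for the reverse direction one notes that if every node admits a witness ordering then the successor relation defined by these orderings is a permutation on the directed edges of $\overrightarrow{H}$, and its orbits are precisely RH-cycles that partition the edge set. This is essentially a restatement of the definition of RH-traversable together with Lemma~\ref{lm:RHchar}, so no new machinery is needed, but it is where the phrase ``we easily obtain'' in the paper is doing its work, and I would spell it out in one or two sentences rather than leave it implicit.
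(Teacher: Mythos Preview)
Your approach --- deriving the corollary node-by-node from Lemma~\ref{lm:RHchar} --- is exactly what the paper intends by ``we easily obtain''. But there is a real gap in your backward direction. You assert that a nonzero even count of single edges at $v$ forces $i_v = o_v$, justifying this by ``on any closed walk the in-degree and out-degree must balance''. That argument presupposes $\overrightarrow{H}$ is already a union of closed walks, which is precisely what the backward direction is meant to establish, so it is circular. And the node-local equivalence you want is simply false for a fixed directed $\overrightarrow{H}$: on the $4$-cycle $a,b,c,d$, the arc set $\{(a,b),(c,b),(c,d),(a,d)\}$ gives every vertex exactly two single edges, yet $i_a = 0 \neq 2 = o_a$ and the graph is not a union of RH-cycles at all. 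So ``corollary condition at $v$'' is \emph{not} equivalent to ``Lemma~\ref{lm:RHchar} condition at $v$'' in isolation, and your paragraph-two reduction breaks down.

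The fix comes from the paper's declared abuse of notation: single edges are treated as direction-free at this stage (see the discussion preceding the corollary and, e.g., the proof of Corollary~\ref{cor:TerseCycles}, where directions are assigned only after parity is restored). Under that reading, the backward direction needs one extra step you never supply: if every vertex has even degree in the undirected single-edge graph, that graph decomposes into edge-disjoint cycles, and orienting each cycle consistently yields $i_v = o_v$ everywhere; only then does Lemma~\ref{lm:RHchar} apply and your successor-permutation argument finish. You correctly flagged ``parity forces $i_v = o_v$'' as the delicate point, but the missing idea is this Eulerian orientation of the single-edge subgraph, not the closed-walk balancing you invoke.
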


In the rest of this section we introduce several operations on cycles,
and the conditions under which these operations will result in a witness cycle.


Consider a subgraph $\overrightarrow{H}$ of $G$ that has only RH-traversable nodes.
Observe that any
port numbering implies a partitioning of $\overrightarrow{H}$ into a set of RH-cycles.
Take any ordering $\gamma$ of this set of cycles. We define two rules
which transform one set of cycles to another. The first rule, {\em Merge3},
takes as an input three cycles
incident to a node and merge them to form a single one. The
second rule, {\em EatSmall}, breaks a non-simple cycle into two
sub-cycles and transfers one of them to another cycle.


\begin{figure}
\begin{center}
\includegraphics[scale=0.76]{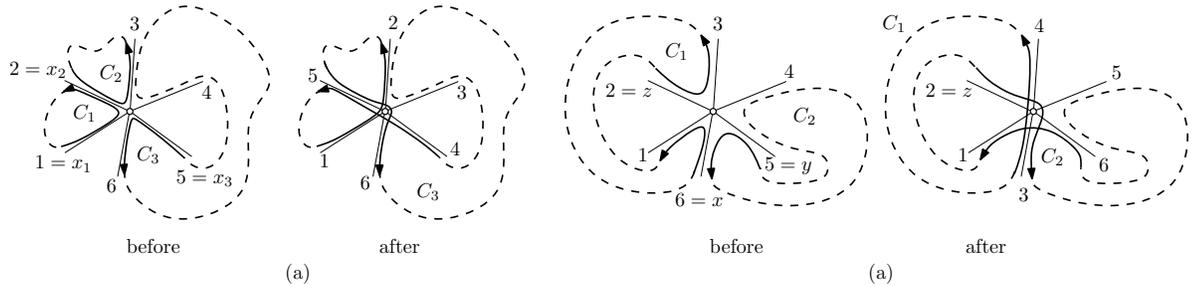}
\end{center}
\caption{(a) Applying rule {\em Merge3}; (b) applying rule {\em EatSmall}.\label{Fig:merge3}\label{Fig:eatSmall}}
\end{figure}

\begin{enumerate}

\item
{\bf Rule Merge3:} Let $v$ be a node incident to
at least three different cycles $C_1$, $C_2$ and $C_3$. Let $x_1$,
$x_2$ and $x_3$ be the underlying edges at $v$ containing incoming edges
for cycles $C_1$, $C_2$ and $C_3$, respectively ($x_1,x_2$ and $x_3$
can be a single edge or a two-way edge in $\overrightarrow{H}$). Suppose w.l.o.g.,
that $x_2$ is between $x_1$ and $x_3$ in cyclic port numbering of $v$. The port numbering
which makes the successor of $x_2$ becomes the successor of
$x_1$, the successor of $x_3$ becomes the successor of $x_2$ and the
successor of $x_1$ becomes the successor of $x_3$ and keeps the relative
order of the remaining edges the same (see Figure \ref{Fig:merge3}(a))
connects the cycles $C_1$, $C_2$ and $C_3$ into a single cycle $C_3$, while
remaining a witness port numbering for $v$ (due to the original port numbering).

\item
{\bf Rule EatSmall:} Let $C_1$ be the smallest cycle in ordering $\gamma$ such that
\begin{itemize}
\item there is a node $v$ that appears in $C_1$ at least twice
\item there is also another cycle $C_2$ incident to $v$
\item $\gamma(C_1)<\gamma(C_2)$
\end{itemize}
Let $x$ and $y$ be underlying edges at $v$
containing incoming edges for $C_1$ and $C_2,$ respectively;
let $z$ be the underlying edge containing the incoming edge by which
$C_1$ returns to $v$ after leaving via the successor of $x$. If $z$
is the successor of $y$, choose a different $x$. Modify the ordering of
the edges in $v$ as follows: (1) the successor of $x$ becomes the
new successor of $y$, (2) the old successor of $y$ becomes the new
successor of $z$, (3) the old successor of $z$ becomes the new
successor of $x$ and (4) the order of the other edges remains
unchanged -- see Figure \ref{Fig:eatSmall}(b).

\end{enumerate}


\begin{lemma}\label{lm:main}
Let $\overrightarrow{K}$ be a two-way connected spanning subgraph of  $G$
with all nodes RH-traversable in  $\overrightarrow{K}$. Consider the set
of RH-cycles generated by some witness port numbering of its nodes,
with $C^*$ being the largest cycle according to some ordering $\gamma$.
If neither {\em Merge3} nor {\em EatSmall} can be applied to
the nodes of $C^*$ then $C^*$ is a witness cycle.
\end{lemma}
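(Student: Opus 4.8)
The plan is to argue by contradiction: suppose $C^*$ is not a witness cycle, i.e.\ it does not contain all nodes of $\overrightarrow{K}$. Since $\overrightarrow{K}$ is two-way connected and $C^*$ is (by assumption) a cycle among a set of RH-cycles partitioning $\overrightarrow{K}$, there must be a node $v$ that lies on $C^*$ and that is also incident to some other cycle of the partition — otherwise the set of underlying edges covered by $C^*$ would be disconnected from the rest of $\overrightarrow{K}$ in the two-way sense, contradicting two-way connectivity (here one uses that a two-way path between a node of $C^*$ and a node outside $C^*$ must cross such a shared node). Fix such a $v$ and let $C_2 \ne C^*$ be a cycle incident to it.

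**Case analysis at $v$.** I would then split according to how many distinct cycles meet $v$ and how often $C^*$ visits $v$. If at least three distinct cycles are incident to $v$ (counting $C^*$, $C_2$, and a third), then rule \emph{Merge3} applies at $v$, contradicting the hypothesis that \emph{Merge3} cannot be applied to the nodes of $C^*$. So assume exactly two cycles, $C^*$ and $C_2$, are incident to $v$. If $v$ appears on $C^*$ only once, then since $v$ is RH-traversable we have $i_v = o_v$ (or $b_v = d_v$), and a short counting argument on the incoming/outgoing edges at $v$ shared between the two cycles forces a third incoming edge or a configuration that again permits \emph{Merge3} — this needs to be checked carefully against Lemma~\ref{lm:RHchar}. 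The remaining case is that $v$ appears on $C^*$ at least twice while also being incident to $C_2$; but then, taking $C_1 = C^*$ in rule \emph{EatSmall} (note $C^*$ is the largest under $\gamma$, so the condition $\gamma(C_1) < \gamma(C_2)$ fails) — here I need the symmetric role: actually \emph{EatSmall} is stated with $C_1$ the smallest suitable cycle, so I must instead locate the smallest cycle $C_1$ with a repeated node incident to a $\gamma$-larger cycle. If such a $C_1$ exists anywhere among nodes of $C^*$, \emph{EatSmall} applies; and the contradiction-producing node $v$ above either is a repeated node of some cycle meeting $C^*$ or reduces to the \emph{Merge3} case.

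**Main obstacle.** The delicate point is bookkeeping the edge multiplicities at the shared node $v$: a node of $C^*$ incident to another cycle could be visited by $C^*$ once through a single edge and once through a two-way edge, or the "other cycle" $C_2$ could itself pass through $v$ multiple times, and I must verify in each subcase that one of the two rules is genuinely applicable (in particular that the technical side-condition in \emph{EatSmall} — that $z$ not be the successor of $y$, with the freedom to re-choose $x$ — can always be met). Establishing the reduction cleanly requires a careful statement of what "incident to a cycle" means at a multiply-visited node, and using Corollary~\ref{cor:RHchar} (even number of single edges at each node) to control the parity of shared edges. Once this local analysis is in place, every case contradicts the hypothesis, so $C^*$ must contain all nodes; combined with the right-hand property inherited from the witness port numbering, $C^*$ is a witness cycle, completing the proof. \qed
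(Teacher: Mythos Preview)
Your overall setup---contradiction, find a boundary node $v$ on $C^*$ incident to another cycle, then a local case analysis---matches the paper. But the case analysis goes astray, and the gap is exactly where you flag it as the ``main obstacle.''

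You split on how many times $C^*$ visits $v$, and then try to invoke \emph{EatSmall} with $C_1 = C^*$. As you yourself notice, this cannot work: $C^*$ is the $\gamma$-largest cycle, so the precondition $\gamma(C_1) < \gamma(C_2)$ fails. Your attempted repair (``locate the smallest $C_1$ with a repeated node incident to a $\gamma$-larger cycle'') is not an argument---you have not shown such a cycle exists at a node of $C^*$. Similarly, in your ``$C^*$ visits $v$ once'' subcase, the promised counting argument is never carried out, and in fact there is no reason a node shared by exactly two cycles, each visiting it once, should admit \emph{Merge3}.

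The missing idea is to exploit two-way connectivity more sharply. Do not just find \emph{some} second cycle at $v$; instead take a two-way edge $\{u,v\}$ in $\overrightarrow{K}$ with $v$ on $C^*$ and $u$ \emph{not} on $C^*$, and look at the two directed edges $(u,v)$ and $(v,u)$. Neither can belong to $C^*$ (since $u\notin C^*$). If they lie in two different cycles, those together with $C^*$ give three distinct cycles through $v$ and \emph{Merge3} applies. If they lie in the same cycle $C'$, they cannot be consecutive on $C'$ (that would force $d_v=1$, but $v$ is also on $C^*$), so $C'$ visits $v$ at least twice. Now $C' \neq C^*$, hence $\gamma(C') < \gamma(C^*)$, and \emph{EatSmall} is applicable with $C_1 = C'$ and $C_2 = C^*$. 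This is the whole proof---no parity bookkeeping or edge-multiplicity case analysis is needed once you track this specific two-way edge.
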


\begin{proof}
Suppose, by contradiction, that $C^*$ does not span all the nodes in $G$.
Let $V'$ be the set of nodes of $G$ not traversed by $C^*$.
Since $\overrightarrow{K}$ is two-way connected there exist two nodes $u,v \in G$,
such that $v$ belongs to $C^*$ and $u \in V'$, and the directed edges $(u,v)$
and $(v, u)$ belong to $\overrightarrow{K}$. Edges $(u,v)$ and $(v,u)$ cannot belong
to different cycles of $\overrightarrow{K}$ because {\em Merge3} would be applicable.
Hence $(u,v)$  and $(v, u)$ must both belong to the same cycle $C'$. However $(u,v)$
and $(v, u)$ cannot be consecutive edges of $C'$ because this would imply $d_v=1$
which is not the case, since $v$ also belongs to $C^*$. Hence $C'$ must visit $v$
at least twice.
However, since $C^*$ is the largest cycle we have $\gamma(C') < \gamma(C^*)$
and the conditions of applicability of rule {\em EatSmall} are satisfied with
$C_1=C'$ and $C_2=C^*$. This is the contradiction proving the claim
of the lemma. \hfill \qed
\end{proof}



\section{Oblivious periodic traversal}

\noindent
In this section we describe the algorithm that constructs a short witness
cycle for graph $G$. According to lemma~\ref{lm:main} it is
sufficient to construct a spanning subgraph $\overrightarrow{K}$ of $G$
which is two-way connected, such that,
each node of $G$ is RH-traversable in $\overrightarrow{K}$. We will
present first a restricted case of a {\em terse set of RH-cycles}, when
it is possible to construct a spanning tree of $G$ with no saturated
node. In this case we can construct a witness cycle of size $2n$.
In the case of arbitrary graphs, we need a more involved argument,
which will lead to a witness cycle of size $4 \frac{1}{3}n$. We
conclude this section with the presentation of
a lower bound of $2.8n$.

\subsection{Terse set of RH-cycles}

\noindent
Suppose that we have a graph $G$, which has a spanning tree $T$ with no saturated node.
This happens for large and non-trivial classes of graphs, including two-connected graphs,
graphs admitting two disjoint spanning trees, and many others. For those graphs we present
an algorithm that finds a shorter witness cycle than one that we can find for arbitrary
graphs. The idea of the algorithm is to first construct a spanning subgraph of $G,$
$\overrightarrow{K}$ of size $2n$,
which contains only RH-traversable
nodes (cf.\ algorithm \textsc{TerseCycles}). Then we apply a port numbering
which partitions $\overrightarrow{K}$ into a set of RH-cycles that can then
be merged into a single witness cycle (cf. Corollary~\ref{cor:TerseCycles}).

\smallskip\smallskip\smallskip
\noindent
{\bf Algorithm} \textsc{TerseCycles}:
\begin{algorithmic}[1]
\STATE Find $T$ -- a spanning subgraph of $G$ with no saturated nodes;
\STATE $\overrightarrow{K} \leftarrow T;$ \{each edge in $T$ is a two-way edge
in $\overrightarrow{K}$\}
\STATE For each node $v \in \overrightarrow{K}$ add to $\overrightarrow{K}$ a
single edge from $G \setminus T;$ \{the single edges form a collection of stars $S$\}
\STATE {\sc Restore-Parity}$(\overrightarrow{K}, T, root(T));$
\end{algorithmic}

\vspace*{0.2cm}
Procedure {\sc Restore-Parity}  has to assure that the number of single edges incident
to each node is even. The procedure visits each node $v$ of the tree $T$ in the bottom-up
manner and counts all single edges incident to $v.$ If this number is odd, the two-way edge
leading to the parent is reduced to a single edge (with the direction to be specified
later). The procedure terminates when the parity of all children of the root
in the spanning tree is restored. Note also that the cumulative degree of the
root must be even since
the cumulative degree of all nodes in $S$ is even. Note also that no decision about
the direction of single edges is made yet.

\bigskip\noindent{\bf Procedure}
\textsc{RestoreParity}(directed graph $\overrightarrow{K},$ tree $T,$ node $v$): integer;
\begin{algorithmic}[1]
\STATE $P_v =$ (number of single edges in $\overrightarrow{K} \setminus T) \pmod 2$;
\IF {$v$ is not a leaf in $T$}
\FOR {each node $c_v \in T$ being a child of $v$}
\STATE $P_v \leftarrow (P_v + RestoreParity(\overrightarrow{K}, T, c_v)) \pmod 2$;
\ENDFOR
\ENDIF
\IF {$P_v = 1$}
\STATE  reduce the two-way edge $(P, parent(P))$ to single;
\ENDIF
\STATE {\bf return} $P_v$;
\end{algorithmic}

\begin{lemma}
After the completion of procedure \textsc{TerseCycles}
every node of $\overrightarrow{K}$ is RH-traversable.
\end{lemma}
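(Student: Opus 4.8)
The plan is to verify the characterization from Lemma~\ref{lm:RHchar}, namely that each node $v$ satisfies $b_v = d_v$ or $i_v = o_v > 0$ in $\overrightarrow{K}$. First I would observe what $\overrightarrow{K}$ looks like after step~2: it is exactly the tree $T$ with every edge doubled into a two-way edge, so at this point every node $v$ has $b_v = \deg_T(v) > 0$ and no single edges at all, which trivially satisfies $i_v = o_v > 0$ (both equal $\deg_T(v)$). The point is to track how steps~3 and~4 modify the incidence counts and to check the characterization still holds.

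After step~3, each node $v$ gets exactly one extra single edge from $G\setminus T$, but note this single edge is shared: if it is a star edge $\{v,c\}$ it counts as a single edge at both $v$ and $c$. So the number of single edges incident to a node after step~3 is its degree in the star forest $S$, which can be odd. This is where \textsc{Restore-Parity} comes in: I would argue by the bottom-up induction that the procedure visits each node, and when the running parity $P_v$ is odd it reduces the two-way tree-edge to its parent to a single edge, thereby adding one single edge incident to $v$ and one incident to $\mathrm{parent}(v)$ (and removing a two-way edge from each). The key invariant to maintain is: after \textsc{Restore-Parity} returns from the subtree rooted at $v$, the number of single edges incident to $v$ contributed by edges \emph{inside that subtree plus the star edges at $v$} has the parity reported, and the edge to $v$'s parent is left two-way or single exactly so as to make $v$'s total count even once the parent's processing is done. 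Then I would conclude that after the whole procedure, every non-root node has an even number of single edges incident to it; for the root, I would invoke the parity remark already in the text — the cumulative degree contributed by $S$ is even (a star forest has an even number of edge-endpoints overall, but more carefully: \textsc{Restore-Parity} propagates parity up, and since the total number of single-edge endpoints is even, the root ends with even parity too), so the root also has an even number of single edges.

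Now I apply Lemma~\ref{lm:RHchar} / Corollary~\ref{cor:RHchar}. For a node $v$: if $v$ still has at least one single edge incident to it in $\overrightarrow{K}$, then it has an even \emph{positive} number of them, and an even number of single edges directed arbitrarily can always be split into equal in/out counts — so $i_v = o_v$, and I must check $i_v = o_v > 0$; since the number of single edges is even and positive, $i_v = o_v \ge 1 > 0$, so this case is fine (this is really the content of the ``$\Leftarrow$'' direction of Lemma~\ref{lm:RHchar} once directions are assigned, but RH-traversability per the definition only needs the existence of a suitable numbering, which Lemma~\ref{lm:RHchar} gives). If $v$ has \emph{no} single edge incident to it, I need $b_v = d_v$, i.e.\ all edges of $G$ at $v$ are two-way in $\overrightarrow{K}$. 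But the only edges present at $v$ in $\overrightarrow{K}$ are tree edges (possibly some reduced to single) and the one star edge; if $v$ has no single edge, none of its tree edges was reduced and its star edge — wait, $v$ has a star edge from step~3 which is single. So actually every node does have a single edge incident to it, unless... I need to be careful: a node could be an internal star node receiving multiple single edges, or the reductions could cancel. The clean statement is: after step~3 every node has $\ge 1$ single edge; reductions in step~4 only add single edges, never remove them; hence every node has $\ge 1$ single edge in the final $\overrightarrow{K}$, so the first case always applies and $b_v = d_v$ is never needed.

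The main obstacle I anticipate is the bottom-up parity bookkeeping in \textsc{Restore-Parity}: I must state and prove the loop invariant precisely, being careful that when a tree-edge to the parent is reduced it changes the parity count at \emph{both} endpoints, and that the recursion's return value correctly packages ``the parity of single edges at $v$ counting everything already finalized in $v$'s subtree.'' I would also double-check the root case: the text asserts the root's cumulative degree from $S$ is even, and I would make this rigorous by summing parities over the whole tree — each reduction of a tree edge flips the parity at two nodes, and the star edges contribute an even total number of endpoints, so the sum of all nodes' single-edge-counts is even and the recursion forces every non-root node even, leaving the root even as well. Everything else is a direct application of the already-proved Lemma~\ref{lm:RHchar}.
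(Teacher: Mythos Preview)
Your approach is correct and matches the paper's one-line proof: the essential observation is that every node acquires at least one single edge in step~3, \textsc{RestoreParity} can only convert two-way edges into further single edges while forcing their count to be even, and hence every node ends with an even positive (so $\geq 2$) number of single edges, which is exactly the hypothesis of Lemma~\ref{lm:RHchar}. One small slip worth fixing: after step~2 you claim $i_v=o_v=\deg_T(v)>0$, but in the paper's notation $i_v,o_v$ count \emph{single} (one-direction) edges, so at that point $i_v=o_v=0$ and $b_v=\deg_T(v)<d_v$ --- nodes are in fact \emph{not} RH-traversable yet, which is precisely why steps~3--4 are needed; your subsequent argument does not rely on this, so the conclusion stands.
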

\begin{proof}
Every node is either saturated or it has at least
two single edges
incident to it. \qed
\end{proof}

\begin{corollary}\label{cor:TerseCycles}
For any graph $G$ admitting a spanning tree $T$, such that none of the nodes
is saturated (i.e., $G \setminus T$ spans all nodes of $G$) it is
possible to construct a witness cycle of length at most $2n.$
\end{corollary}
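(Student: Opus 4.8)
The plan is to combine the lemmas already established in this section into a short chain. By the preceding lemma, after running \textsc{TerseCycles} every node of $\overrightarrow{K}$ is RH-traversable, so by Corollary~\ref{cor:RHchar} (or directly by the definition) any witness port numbering of $\overrightarrow{K}$ partitions it into a collection of RH-cycles. I first need to observe that $\overrightarrow{K}$ is two-way connected: it contains the spanning tree $T$, every edge of which is a two-way edge in $\overrightarrow{K}$ by construction (Step~2), and \textsc{RestoreParity} only ever demotes a two-way tree edge to a single edge when the parity at a node is odd — but since the cumulative degree of every node of $G$ is even in the star collection $S$, the number of such demotions is controlled, and in particular each demotion still leaves the node connected to the rest via $T$. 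I should check carefully that the set of two-way edges remaining after \textsc{RestoreParity} still spans all of $V$; the key point is that a tree edge $(v,\mathit{parent}(v))$ is demoted only when $v$ has an odd number of single edges, so $v$ still has single edges attaching it, and connectivity in the two-way sense is preserved through the subtree structure. Granting two-way connectivity, Lemma~\ref{lm:main} applies: taking $\overrightarrow{K}$ as the two-way connected spanning subgraph with all nodes RH-traversable, and taking the largest cycle $C^*$ under any ordering $\gamma$, repeated application of \textsc{Merge3} and \textsc{EatSmall} (which never changes $\overrightarrow{K}$ as a subgraph, only the port numbering) terminates in a configuration where neither rule applies, and then $C^*$ is a witness cycle.

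The remaining point is the size bound. The witness cycle $C^*$ uses exactly the edges of $\overrightarrow{K}$, so its length equals the number of directed edges in $\overrightarrow{K}$. The spanning tree $T$ has $n-1$ edges, each contributing two directed edges, for a total of $2(n-1)$; the single edges added in Step~3 form a collection of stars $S$ covering all $n$ nodes, and \textsc{RestoreParity} only removes directed edges (demoting two-way to single, i.e.\ deleting one direction) — it never adds any. So the length of $C^*$ is at most $2(n-1) + |S| - (\text{number of demotions})$, and since $S$ is a union of stars spanning all nodes it has at most $n - (\text{number of star components})$ edges, hence fewer than $n$ single edges; combined with the $2(n-1)$ from the tree and the fact that demotions only decrease the count, the total is at most $2(n-1) + n < 3n$, which is comfortably below the claimed $2n$... — wait, that is the wrong target; the corollary claims $2n$, which means the correct accounting must be that the added single edges are \emph{reused}, i.e.\ the witness cycle traverses each undirected tree edge once in each direction but the single edges of $S$ are the very edges that make the merging possible, and the actual edge count of $\overrightarrow{K}$ should be argued to be at most $2n$ directly. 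The honest bound is: $\overrightarrow{K}$ consists of $2(n-1)$ directed edges from $T$ plus at most one single directed edge per node from $S$ minus the demotions, and since each node receives at most one star edge while each star edge is shared, $|S|\le n-1$, giving at most $2(n-1)+(n-1) < 3n$; to reach exactly $2n$ one instead observes that demotions convert a two-way tree edge (2 directed edges) into a single edge while the odd-parity node already contributed a single edge, so each demotion nets zero new edges, and the careful count yields the two-way edges of $T$ together with the single edges of $S$ totalling at most $2n$ directed traversals.

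The main obstacle I anticipate is precisely this size computation: making the bookkeeping of \textsc{RestoreParity} tight enough to land on $2n$ rather than a weaker constant, and in particular verifying that the demotion step does not destroy two-way connectivity. I would handle the latter by induction on the tree $T$ processed bottom-up: each subtree rooted at $v$ remains two-way connected internally because a demoted edge $(v,\mathit{parent}(v))$ is only demoted when $v$ itself still carries single edges from $S$, and those single edges connect $v$ (in the two-way sense, after the witness port numbering is chosen and the undirected structure is considered) to its star center, which lies in the same component; the root's cumulative degree being even guarantees the process closes up consistently. Once two-way connectivity is secured, the rest is an appeal to Lemma~\ref{lm:main} plus the edge count, so the corollary follows.
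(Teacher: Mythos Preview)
Your approach has a genuine gap: you attempt to invoke Lemma~\ref{lm:main}, whose hypothesis is that $\overrightarrow{K}$ be \emph{two-way connected}, but after \textsc{RestoreParity} this can fail. Take $G=K_4$ on $\{a,b,c,d\}$ with spanning tree $T$ the path $a\!-\!b\!-\!c\!-\!d$. No node is saturated; choosing $S=\{ac,bd\}$ in Step~3 and running \textsc{RestoreParity} bottom-up from the root $a$ demotes $cd$ (since $P_d=1$) and $ab$ (since $P_b=1$), leaving $bc$ as the sole two-way edge. Then $a$ and $d$ have no incident two-way edge at all, so $\overrightarrow{K}$ is not two-way connected. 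Your inductive sketch claims that ``single edges connect $v$ in the two-way sense to its star center'', but by the paper's definition a two-way path uses only two-way edges; single edges contribute nothing to two-way connectivity. Worse, in this example the (forced) witness orderings partition $\overrightarrow{K}$ into exactly two simple triangles $(c,a)(a,b)(b,c)$ and $(c,b)(b,d)(d,c)$; neither \textsc{Merge3} (only two cycles meet at any node) nor \textsc{EatSmall} (no cycle revisits a node) applies, and neither triangle spans $G$. So the route through Lemma~\ref{lm:main} simply does not go through here.

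The paper does \emph{not} appeal to Lemma~\ref{lm:main} for this corollary. Instead it observes that after \textsc{RestoreParity} every node has a positive even number of incident single edges, orients the single edges into edge-disjoint directed cycles, merges any two single-edge cycles meeting at a node by a local port swap, and is left with node-disjoint single-edge cycles whose vertex sets partition $V$; adjacent parts are then stitched together using the remaining two-way tree edges. This sidesteps two-way connectivity entirely. For the size bound the paper's key observation --- which your accounting circles around but never lands on --- is that every single edge added in Step~3 has a ``free'' endpoint (a star leaf of degree one in $S$), and the odd parity created there forces a demotion during \textsc{RestoreParity}; thus each added single directed edge is paid for by the removal of one directed edge from a two-way tree edge, and the total directed-edge count never exceeds the initial $2(n-1)+\text{(net zero)}\le 2n$.
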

%

Corollary~\ref{cor:TerseCycles} gives small witness cycles for a large class of graphs.
It should be noted for $3$-regular graphs, finding a spanning tree having no
saturated nodes corresponds to finding a Hamiltonian path, a problem known to be
NP-hard even in this restricted setting~\cite{Garey}.

\subsection{Construction of witness cycles in arbitrary graphs}
\noindent
The construction of witness
cycle is based on the following approach. First select a spanning tree $T$ of graph $G$
composed of two-way edges. Let $G_i$, for $i=1, 2, \dots, k$ be the connected components
of $G \setminus T$, having, respectively, $n_i$ nodes. For each such component we apply
procedure \textsc{3L-Partition}, obtaining three sets $X_{i},Y_{i}$ and $Z_{i}$ and
a backbone tree $T_{i}$. We then add single edges incident to the
nodes of sets $Y_{i}$ and $Z_{i}$, and we apply the procedure \textsc{RestoreParity}
to each component $G_i$. We do this in such a way that the total number of edges in
$G_i$ is smaller than $2 \frac{1}{3}n$. For the union of graphs
$T \cup G_1 \cup G_2 \cup \dots \cup G_k$ we take a port numbering
that generates a set of cycles.
The port numbering and orientation of edges in the union of graphs is obtained as follows.
First we remove temporarily all two-way edges from the union.
The remaining set of single edges is partitioned into a collection of simple cycles,
where edges in each cycle have a consistent orientation. Further we reinstate all two-way
edges in the union, s.t., each two-way edge is now represented as two
arcs with the opposite direction. Finally we provide port numbers at each
node of the union, s.t., it is consistent
with the RH-traversability condition, see lemma~\ref{lm:RHchar}.
We apply rules {\em Merge3} and {\em EatSmall} to this set of cycles until neither rule
can be applied. The set of cycles obtained will contain
a witness cycle, using lemma~\ref{lm:main}.

\bigskip\noindent{\bf Algorithm} \textsc{FindWitnessCycle};
\begin{algorithmic}[1]
\STATE Find a spanning tree $T$ of graph $G$ \{two-way edges\}
\FOR {each connected component $G_i$ of $G \setminus T$}
\STATE {\sc 3L-Partition}$(G_i,X_i,Y_i,Z_i,T_i)$;
\STATE Form set $P_i$ by selecting for each node in $Z_i$ two edges
           leading to $Y_i$; \{single edges\};
\STATE Form a set of independent stars $S_i$ spanning all nodes in $Y_i$
that are not incident to $P_i$; \{single edges\};
\STATE {\sc RestoreParity}$(G_i \cup P_i \cup S_i, T_i, root(T_i))$;
\ENDFOR
\STATE $\overrightarrow{K} \leftarrow T \cup G_1 \cup G_2 \cup \dots \cup G_k$; \label{witness-K}
\STATE Take any port numbering and produce a set
$\mathfrak{C}$ of RH-cycles induced by it;
\STATE  Apply repeatedly {\em Merge3} or, if not possible, {\em EatSmall} to $\mathfrak{C}$ until neither rule can be applied;
\STATE {\bf return} the witness cycle of $\mathfrak{C}$;
\end{algorithmic}

\begin{theorem}\label{th:4.3}
For any $n$-node graph algorithm \textsc{FindWitnessCycle} returns a
witness cycle of size at most $4\frac{1}{3}n-4$.
\end{theorem}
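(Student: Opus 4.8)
The plan is to show that the witness cycle returned by \textsc{FindWitnessCycle} has length at most the number of arcs (directed edges) of the spanning subgraph $\overrightarrow{K}=T\cup G_1\cup\cdots\cup G_k$ built by the algorithm, and then to bound that number by $4\tfrac{1}{3}n-4$. For the first part I would invoke Lemma~\ref{lm:main}. The graph $\overrightarrow{K}$ is two-way connected because it contains the spanning tree $T$, all of whose edges are two-way, and all of its nodes are RH-traversable: by Lemma~\ref{lm:RHchar} it suffices to note that a node of $X_i$ is saturated in $T_i$ (so $b_v=d_v$), a node of $Z_i$ carries exactly the two single edges chosen for it in $P_i$ (so $i_v=o_v=1>0$ once the single edges are oriented into consistently directed simple cycles), and a node of $Y_i$ receives at least one single edge — from $P_i$ if it is incident to $P_i$, otherwise from the star of $S_i$ covering it — after which \textsc{RestoreParity} only adds single edges (by reducing a parent edge in $T_i$), so the node ends up with an even, positive number of single edges. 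Hence the repeated application of {\em Merge3} and {\em EatSmall} ends with a set $\mathfrak{C}$ of RH-cycles to whose largest cycle $C^*$ neither rule applies, so $C^*$ is a witness cycle by Lemma~\ref{lm:main}; since the two rules never create arcs and $\mathfrak{C}$ always partitions the arcs of $\overrightarrow{K}$, the length of $C^*$ is at most $|E(\overrightarrow{K})|$.

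For the arc count, $\overrightarrow{K}$ contains $2(n-1)$ arcs of $T$, and for each component $G_i$ — with $n_i=|X_i|+|Y_i|+|Z_i|$ nodes and $\sum_i n_i=n$ — the arcs added while processing $G_i$. These are: the backbone tree $T_i$, which is a tree on $X_i\cup Y_i$ and therefore has $|X_i|+|Y_i|-1$ edges, of which $r_i$ get reduced to single edges by \textsc{RestoreParity} and the rest stay two-way; the $2|Z_i|$ single edges of $P_i$; and the $|S_i|$ single edges of $S_i$. Thus $G_i$ contributes $2(|X_i|+|Y_i|-1)+2|Z_i|+|S_i|-r_i$ arcs, and it remains to prove that this is at most $\tfrac{7}{3}n_i-2$, equivalently that $|S_i|-r_i\le\tfrac{1}{3}n_i$.

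This last inequality is the heart of the matter and is exactly where the three-layer partition is used. The easy direction is $|S_i|\le|Y_i|$: $S_i$ is a family of vertex-disjoint stars covering a subset of $Y_i$ with horizontal edges lying outside $T_i$ (which exist by property~(1) of Lemma~\ref{lm:3-conditions}), hence uses at most one edge per covered node; moreover, when $|Z_i|$ is large, property~(2) of Lemma~\ref{lm:3-conditions} forces the $P_i$ edges onto many nodes of $Y_i$, which shrinks $S_i$ accordingly. The delicate point — and the step I expect to be the main obstacle — is to lower bound $r_i$: each single edge placed by $S_i$ flips the parity of a node of $Y_i$, which, traced bottom-up through $T_i$, forces a reduction of some ancestor edge unless that flip is cancelled by reductions propagating up from the node's children; one needs a careful amortized analysis of this parity propagation along $T_i$, combined with the structural bounds above and with the accounting of the $P_i$-edges, to conclude that the reductions keep pace with $S_i$, i.e. $|S_i|-r_i\le\tfrac{1}{3}n_i$. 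Summing over the components then gives $|E(\overrightarrow{K})|\le 2(n-1)+\sum_i\bigl(\tfrac{7}{3}n_i-2\bigr)=\tfrac{13}{3}n-2-2k\le 4\tfrac{1}{3}n-4$ whenever $k\ge 1$; and when $k=0$ the graph $G$ equals $T$, so $\overrightarrow{K}=T$ and the witness cycle has length $2(n-1)\le 4\tfrac{1}{3}n-4$.
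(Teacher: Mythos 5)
Your setup is sound and matches the paper's strategy: you correctly reduce the theorem to (i) the applicability of Lemma~\ref{lm:main} to $\overrightarrow{K}$ and (ii) an arc count of $\overrightarrow{K}$, and you correctly isolate the decisive inequality, namely that the single edges added in line~5 cost at most $\tfrac{1}{3}n_i$ net arcs per component after the reductions performed by \textsc{RestoreParity}. But you then stop exactly at that point: you write that ``one needs a careful amortized analysis of this parity propagation along $T_i$ \dots to conclude that $|S_i|-r_i\le\tfrac{1}{3}n_i$'' without supplying that analysis. This is not a routine verification you may defer --- it is the entire content of the theorem beyond the easy $4n-4$ bound, and as stated your argument does not prove it. Indeed, without further structural control nothing prevents many $S_i$ edges from having their parity flips cancel pairwise along $T_i$, producing few reductions; the inequality is false for an arbitrary parity-flipping set and must be extracted from how $S_i$ and $P_i$ are actually built.

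Concretely, the paper's argument runs as follows. First it records two structural facts you do not use: $S_i$ can be chosen so that it is a forest of stars (no path of length three, else delete the middle edge), and only star centers need be incident to $P_i$ edges. Then, for a node $v\in Y_i$ carrying an $S_i$ edge, the parity flip at $v$ forces \textsc{RestoreParity} to reduce a two-way tree edge at $v$ --- so that $S_i$ edge is amortised at no net cost --- \emph{unless} the tree edge at $v$ is a horizontal edge of $T_i$ whose other endpoint also carries an $S_i$ edge, in which case the two flips cancel and only one reduction serves two additions. The horizontal $T_i$ edges with $S_i$ edges at both endpoints form a forest; repeatedly amortising, in each tree, a leaf's $S_i$ edge against the tree edge to its parent (one-to-one) leaves a collection of independent horizontal edges, each associated with two independent $S_i$ edges, i.e.\ each unamortised unit is charged to a group of at least three distinct nodes. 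This yields at most $\tfrac{n_i}{3}$ uncompensated arcs per component, which is precisely your missing inequality. You should either reproduce this charging argument (including the choice of $S_i$ as a star forest, on which the three-node grouping depends) or find a substitute for it; as written, the proof has a hole at its central step. The remainder of your write-up (two-way connectivity, RH-traversability --- modulo the minor point that a saturated node of $X_i$ may still have tree edges reduced to single by \textsc{RestoreParity}, in which case it is RH-traversable via $i_v=o_v>0$ rather than $b_v=d_v$ --- and the final summation over components using $k\ge 1$) is consistent with the paper.
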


\begin{theorem} \label{th:complexity}
The algorithm \textsc{FindWitnessCycle} terminates in $O(|E|)$ time.
\end{theorem}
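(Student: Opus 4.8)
The plan is to walk through \textsc{FindWitnessCycle} line by line. The algorithm has a constant number of phases, so it suffices to charge $O(|E|)$ to each phase, putting the real attention on the concluding cycle-merging loop, which is the only place a naive implementation could blow up.

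\textbf{Preprocessing (lines 1--9).} Computing the spanning tree $T$ and then the connected components $G_1,\dots,G_k$ of $G\setminus T$ are two graph searches, $O(|E|)$ together. The \textbf{for} loop (lines 2--7) runs once per component, and since the $G_i$ are pairwise edge-disjoint it is enough to bound one iteration by $O(|E_i|+|V_i|)$ and sum over $i$. Inside an iteration: \textsc{3L-Partition} is $O(|E_i|)$ by Lemma~\ref{lm:3layer-complexity}; forming $P_i$ scans, for each $w\in Z_i$, the incident edges of $w$ until two edges into $Y_i$ are found (they exist by property~(2) of Lemma~\ref{lm:3-conditions}), $O(|E_i|)$ overall; forming the stars $S_i$ is a greedy sweep over the still-uncovered nodes of $Y_i$ using property~(1) of Lemma~\ref{lm:3-conditions}, again $O(|E_i|)$; and \textsc{RestoreParity} is one post-order traversal of $T_i$ in which each node inspects its incident single edges once, hence $O(|E_i|+|V_i|)$. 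Summation gives $O(|E|)$ through line~7. Assembling $\overrightarrow{K}$ in line~8 is an $O(|E|)$ merge of adjacency lists; importantly $\overrightarrow{K}$ is the union of the two-way tree $T$ and, per component, the backbone tree $T_i$, the $2|Z_i|$ single edges of $P_i$, and the stars $S_i$, so $\overrightarrow{K}$ has only $O(n)$ directed edges. Line~9 fixes the port numbering described in Section~\ref{s:3l-partition} and reads off the induced RH-cycles by following right-hand-successor pointers from each not-yet-used directed edge until the walk closes; every directed edge is touched $O(1)$ times, so this is $O(|E|)$ and yields a family $\mathfrak{C}$ with $|\mathfrak{C}|=O(n)$ cycles, since each cycle uses at least two directed edges of $\overrightarrow{K}$.

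\textbf{The loop (lines 10--11).} Keep each cycle of $\mathfrak{C}$ as a circular doubly linked list of directed edges; keep for each node $v$ the list of its occurrences across the current cycles, together with the cyclic order of ports at $v$; and keep the cycles in $\gamma$-order, which never has to change because \emph{Merge3} only deletes cycles and \emph{EatSmall} only moves material between two cycles already present. An application of \emph{Merge3} is a local re-wiring of the port successors at a single node, so $O(1)$ once that node has been located; it decreases $|\mathfrak{C}|$ by two, so there are at most $|\mathfrak{C}|/2=O(n)$ of them, contributing $O(n)$ in total. An application of \emph{EatSmall} also re-wires ports at one node, but it additionally re-tags the transferred sub-cycle $L$ as belonging to $C_2$ instead of $C_1$, which costs $O(|L|)$ via linked-list splicing plus a walk along $L$. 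Locating an applicable configuration for either rule is done by maintaining a queue of nodes that currently witness the precondition of some rule, built in $O(|E|)$ and updated in $O(1)$ per application, so no search overhead is incurred. What is left to control is the number of \emph{EatSmall} applications and the total length $\sum|L|$ of their walks; returning the witness cycle in line~11 is then an $O(n)$ traversal of $C^*$.

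\textbf{The main obstacle} is precisely the \emph{EatSmall} accounting. Since \emph{EatSmall} does not change the number of cycles, the easy counting bound used for \emph{Merge3} does not apply, and a crude potential such as $\sum_{C\in\mathfrak{C}}\gamma(C)\,\ell(C)$ --- which does increase strictly, because every \emph{EatSmall} moves the edges of $L$ to a strictly $\gamma$-later cycle and nothing ever moves an edge to a $\gamma$-earlier one --- only yields an $O(n^2)$ bound. I see two routes to a genuinely linear bound, and the proof should take whichever is cleaner. The first is a sharper amortised argument: because \emph{EatSmall} always operates on the $\gamma$-smallest offending cycle, and because the stars $P_i\cup S_i$ leave each node with only a bounded number of single edges, a cycle can become an \emph{EatSmall} source only $O(1)$ times, so each directed edge migrates forward in the $\gamma$-order $O(1)$ times and $\sum|L|=O(n)$. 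The second, probably safer, route is to avoid iterating the rules literally: since $\overrightarrow{K}$ has in-degree equal to out-degree at every node (by Lemma~\ref{lm:RHchar}, either $b_v=d_v$ or $i_v=o_v$) and is strongly connected (being two-way connected), a single Hierholzer-style pass computes in $O(|\overrightarrow{K}|)=O(n)$ time a port numbering whose induced RH-decomposition already admits no application of \emph{Merge3} or \emph{EatSmall}, and its correctness --- that the largest resulting cycle is a witness cycle --- is then free from Lemma~\ref{lm:main}. Either way, summing the $O(|E|)$ contributions of the constantly many phases gives the claimed $O(|E|)$ running time.
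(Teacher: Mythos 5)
Your preprocessing analysis (lines 1--9) matches the paper's and is fine. The genuine gap is exactly where you locate it --- the $O(n)$ bound on the rule-application loop --- but neither of your two proposed routes closes it. Route 1 hinges on the claim that each directed edge migrates forward in the $\gamma$-order only $O(1)$ times; nothing supports this. The premise that each node is incident to a bounded number of single edges is itself false (a node of $Y_i$ may be the centre of a large star of $S_i$ and may be chosen by many nodes of $Z_i$ when $P_i$ is formed), and even granting it, a cycle that has absorbed material can later become the $\gamma$-smallest offending cycle again, so all you actually establish is the $O(n^2)$ potential bound you yourself dismiss. Route 2 is unsound for a structural reason: an Eulerian circuit of $\overrightarrow{K}$ induces at each node an arbitrary pairing of incoming to outgoing directed edges, whereas the right-hand rule only realizes pairings of the form ``incoming underlying edge $\mapsto$ next underlying edge in the cyclic port order.'' Not every Eulerian transition system is of this form --- for instance, two two-way edges at a node of degree greater than two cannot each be the port-successor of the other --- which is precisely why the \emph{Merge3}/\emph{EatSmall} machinery exists in the first place (cf.\ the appendix example of a witness port numbering whose RH-cycles do not form a single cycle). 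So one cannot simply assert that a Hierholzer pass outputs a port numbering admitting no rule application.

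The paper's resolution is simpler than either route and is the idea you are missing: the rules are never applied ``in the wild'' but always with the current largest cycle $C^*$ as the absorbing cycle (the merged cycle inherits rank $\gamma(C^*)$ in \emph{Merge3}, and $C_2=C^*$ in \emph{EatSmall}), while scanning the vertices of $C^*$ along the cycle. Under this discipline every directed edge enters $C^*$ at most once and never leaves it, the cost of each rule application is charged to the edges it adds to $C^*$ (which are traversed once to be relabelled), and the total work is bounded by the final size of $C^*$, which is $O(n)$ by Theorem~\ref{th:4.3}; Lemma~\ref{lm:main} then guarantees the result is a witness cycle. With that modification your data structures (circular lists, per-node occurrence lists, splicing) do yield the claimed $O(|E|)$ bound; without it, the linear-time claim for line 10 remains unproven.
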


\subsection{Lower Bound}
We have shown in the previous section that for any $n$-node graph we can construct
a witness cycle of length at most $4\frac{1}{3}n-4$.
In this section we complement this result with the lower bound $2.8n$:

\begin{theorem} \label{th:lb}
For any non-negative integers $n$, $k$ and $l$ such that, $n=5k+l$ and $l<5,$
there exists an $n$-node graph for which any witness cycle
is of length $14k+2l$.
\end{theorem}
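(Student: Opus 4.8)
The plan is to exhibit an explicit family of $n$-node graphs together with a matching lower-bound argument that any witness cycle — any cycle in $\overrightarrow{G}$ visiting all nodes and respecting the right-hand rule at every node — must traverse at least $14k+2l$ directed edges. The natural building block is a small gadget on $5$ nodes that forces a large number of edge traversals relative to its size; we chain $k$ such gadgets together and pad with $l<5$ extra nodes, so that $n=5k+l$ and the forced length is $14k+2l=2.8n$ exactly when $l=0$. The gadget I have in mind is a graph in which the degree sequence is such that the right-hand rule, combined with the requirement that every node be visited, leaves no choice but to traverse many edges twice: recall from Corollary~\ref{cor:RHchar} that the subgraph of edges actually used must have an even number of single (once-traversed) edges at each node, and a node with all single edges even but few of them still costs a lot if its degree is small. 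A good candidate is a "bowtie"-like or $K_4$-minus-an-edge gadget, or a small bipartite gadget, connected to its neighbours by a single bridge edge; the bridge edge is necessarily traversed twice (it is a cut edge, so any closed walk covering both sides uses it an even and hence $\geq 2$ number of times), which already injects the "overhead" that pushes the ratio above $2$.

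The argument would proceed in the following steps. First I would define the gadget $H$ on $5$ vertices precisely, specify how $k$ copies are joined in a path-like fashion (each consecutive pair sharing or bridged by one edge), and describe how the $l$ leftover vertices are attached so that the total is $n=5k+l$ and the total edge count is controlled. Second, I would fix an arbitrary witness cycle $\overrightarrow{C}$ and, working gadget by gadget, lower-bound the number of directed edges of $\overrightarrow{C}$ lying inside (or on the boundary of) that gadget: the key local claim is that within each gadget the right-hand rule forces certain edges to be used in both directions, contributing $14$ to the directed-edge count per gadget, plus $2$ per padding vertex. Since $\overrightarrow{C}$ is a single cycle it enters and leaves each gadget along the bridge edges, and a parity/counting argument (each bridge used an even, hence positive, number of times; the sum of directed edges incident to an interior node of the cycle is even and at least two per visit) yields the per-gadget bound without double counting. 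Summing over the $k$ gadgets and $l$ padding vertices gives total length $\geq 14k+2l$. Third, I would exhibit an explicit port numbering and an explicit witness cycle of exactly this length to show the bound is tight, i.e. that "of length $14k+2l$" is achieved and not merely a lower bound.

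The main obstacle will be the second step: proving the clean local claim that \emph{every} witness cycle must spend at least $14$ directed-edge traversals per $5$-node gadget, for \emph{every} port numbering the adversary might be forced to consider — here the quantifiers are delicate, since the theorem asserts a graph for which \emph{all} witness cycles (over all port numberings admitting one) are long. I would handle this by a case analysis on which edges of the gadget are single versus two-way in the used subgraph $\overrightarrow{H}\subseteq\overrightarrow{C}$, invoking Lemma~\ref{lm:RHchar} / Corollary~\ref{cor:RHchar} to prune the cases: a node of small degree that is not saturated needs $\geq 2$ single edges, and chasing the parity constraints through the $5$ vertices of the gadget together with the forced double traversal of the bridge edges should leave only configurations of total directed weight $\geq 14$. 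The padding contribution of $2$ per extra node is the easy part, since any node must be entered and left at least once. A secondary, more bookkeeping-type difficulty is making sure the gadgets-plus-padding construction is a \emph{simple connected} graph for every admissible $(n,k,l)$ and that the boundary edges are counted consistently across adjacent gadgets; this is routine once the joining scheme is pinned down.
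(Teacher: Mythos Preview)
Your overall architecture --- chain $k$ copies of a $5$-node gadget via bridge edges, show each gadget forces a fixed surplus of directed traversals, then pad with a path of $l$ vertices --- is exactly the paper's. The paper's gadget (called a ``diamond'') has vertex set $\{u,v,x,y,z\}$ with $u$ and $v$ each adjacent to all of $x,y,z$; in other words it is $K_{2,3}$, with $u$ and $v$ serving as the attachment points for neighbouring copies. Your tentative candidates $K_4-e$ (only four vertices) and the bowtie would not produce the required count; ``a small bipartite gadget'' is the one that works.

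The substantive difference is in how the per-gadget lower bound is established. You propose a case split on which edges are single versus two-way, driven by the parity constraints of Corollary~\ref{cor:RHchar}. The paper instead \emph{traces the witness cycle step by step}: starting w.l.o.g.\ from the arc $(v,x)$, it argues at each node that the next arc is uniquely forced --- either because the RH-rule forbids immediate reversal, or because the candidate arc already has a different assigned predecessor, or because choosing it would leave some vertex (e.g.\ $z$) permanently unvisited. After twelve forced steps every edge of the gadget has been used in both directions, and the connecting bridge is then forced to be two-way as well by the same kind of argument. This direct successor-forcing trace is considerably shorter than an exhaustive single/two-way case analysis, and it makes the global ``single cycle visiting all nodes'' hypothesis do work at a single pinch point (the ``otherwise $z$ is missed'' step), whereas in your scheme that constraint would have to be threaded through several branches of the case split. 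Both routes are viable; the paper's is simply more economical.
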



\section{Periodic traversal with constant memory}

In this section we focus on the construction of a tour in arbitrary
undirected graphs to be traversed by an agent equipped with
a constant memory.
%
%
%
The main idea of the periodic graph traversal mechanism
proposed in \cite{Ilc06}, and further developed in \cite{GKMNZ08},
is to visit all nodes in the graph while traversing along an
{\sl Euler tour} of a (particularly chosen) spanning tree.
In \cite{Ilc06} the agent after entering a node $v$ in the tree via port $1,$
which always leads to the parent in the spanning tree,
visits recursively all subtrees accessible from $v$ via ports
$2,\dots,i+1,$ where $i$ is the number of children of $v.$
When the agent returns from the last ($i$th) child it either:
(1) returns to its parent via port 1, when $i+1$ is also the degree of
$v$ (i.e., $v$ is saturated in $T$); or
(2) it attempts to visit another child of $v$ adopting the edge $e$
associated with port $i+2.$
In case (2) the agent learns at the other end of $e$ that the port number is
different from $1$, i.e.,
the agent is not visiting a legal child of $v$.
The agent first returns to $v$ and
then immediately to
the parent of $v,$ where it continues the tree traversal process.
In these circumstances, the edge $e$ is called a {\sl penalty edge}
since $e$ does not belong to the spanning
tree and an extra cost has to be charged for accessing it.
Since the spanning tree
has $n-1$ edges, and at each node the agent can be forced to examine a
penalty
edge, the number of steps
performed by the agent (equal to the length of the periodic tour)
may be as large as $4n-2$ ($n-1$ edges of the spanning tree and $n$
penalty edges, where each edge is traversed in both directions).
The main result of \cite{GKMNZ08} is the efficient construction of a specific
spanning tree
supported by a more advanced visiting mechanism stored in the agent's memory.
They showed that the agent is able to avoid penalties
at a fraction of at least $\frac{1}{8}n$ nodes. This in turn gave the
length of
the periodic tour not larger than $3.75n$.

In what follows we show a new construction of the spanning tree, based
on the earlier three-layer partition. This, supported by a new labeling
mechanism together with slightly
increased memory of the agent, allows to avoid penalties at
$\frac{1}{4}n$ nodes resulting in a periodic tour of length
$\le 3.5n.$
In the new scheme some leaves in the spanning tree are connected with
their parents
via port 2 (in \cite{GKMNZ08} this port is always assumed to be 1). The
rationale behind
this modification is to treat edges towards certain leaves as penalty
edges (rather than
the regular tree edges) and in turn to avoid visits beyond these leaves,
i.e., to avoid
unnecessary examination of certain penalty edges.

Recall that the nodes of the input graph can be partitioned into three sets
$X,Y$ and $Z$ where
all nodes in $X$ and $Y$ are spanned by a backbone tree, see
section~\ref{s:3l-partition}.
The spanning tree $T$ is obtained from the backbone tree by connecting
every node in $Z$
to one of its neighbors in $Y.$ Recall also that every node $v\in X$ is
{\sl saturated}, i.e.,
all edges incident to $v$ in $G$ belong also to the spanning tree.
Every node in $Y$ that lies on a path in $T$ between two nodes in $X$
is called a {\sl bonding node}. The remaining nodes in $Y$
are called {\sl local}.

\noindent
{\bf Initial port labeling}
When the spanning tree $T$ is formed, we pick one of its leaves as the
root $r$ where the two ports located on the tree edge
incident to $r$
are set to 1. Initially, for any node $v$ the port leading to the parent
is set to 1 and ports
leading to the $i$ children of $v$ are set to $2,\dots,i+1,$ s.t., the subtree
of $v$ rooted in child $j$
is at least as large as the subtree rooted in child $j',$ for all $2\le
j<j'\le i+1.$
All other ports are set arbitrarily using distinct values from the range
$i+2,\dots,d_v,$ where $d_v$ is the degree of $v.$
Later, we modify allocation of ports at certain leaves of the spanning
tree located in $Z.$
In particular we change labels at all children having no other
leaf-siblings in $T$ of bonding nodes
(see, e.g., node $w_{1}$ in Figure~\ref{f:w1w2}), as well as in single
children of local nodes,
but only if the local node is the last child of a node in $X$ that has
children on its own (see, e.g., node $w_{2}$ in Figure~\ref{f:w1w2}).

\begin{figure}
\begin{center}
\includegraphics[width=5in]{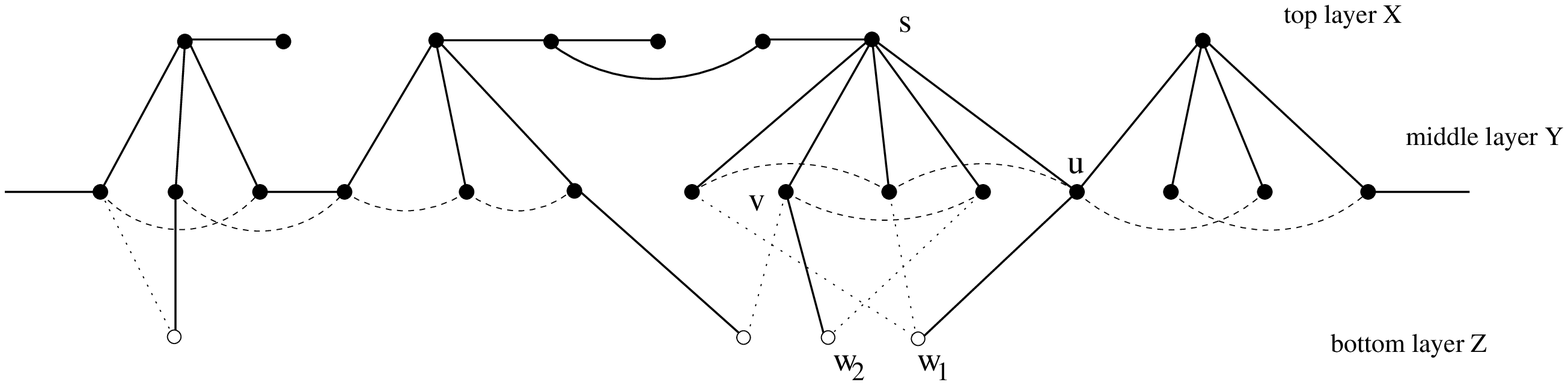}
\end{center}
\caption{Fragment of the spanning tree with the root
located to the right of $w_1$ and $w_2.$\label{f:w1w2}}
\end{figure}

\noindent
{\bf Port swap operation} Recall that every leaf $w$ located at the
level $Z$ has also an incident edge $e$ outside of $T$ that leads to
some node $v$ in $Y$
(property 2 of the three-layer partition).
When we swap port numbers at $w$, we set to 2 the port on the tree edge
leading to
the parent of $w.$ We call such edge a {\sl sham penalty edge} since it
now pretends
to be a penalty edge while, in fact, it connects $w$ to its parent in
the spanning tree $T$.
We also set to 1 the port number on the lower end of $e.$ All other port
numbers at $w$
(if there are more incident edges to $w$) are set arbitrarily.
After the port swap operation at $w$ is accomplished we also have to ensure
that the edge
$e$ will never be examined by the agent, otherwise it would be wrongly
interpreted as
a legal tree edge, where $v$ would be recognised as the parent of $w.$
In order to avoid this problem we also set ports at $v$ with greater
care. Note that $v$
has also an incident horizontal edge $e'$ outside of $T$
(property 1 of
the three-layer partition). Assume that the node $v$ has $i$ children in
$T.$ Thus if we
set to $i+2$ the port on $e'$ (recall that port 1 leads to the parent of $v$ and
ports $2,..,i+1$ lead
to its children) the port on $e$ will have value larger than $i+2$ and
$e$ will never be accessed by the agent.
Finally note that the agent may wake up in the node with a sham penalty edge
incident to it.
For this reason we introduce an extra state to the finite state
automaton ${\cal A}$
governing moves of the agent in~\cite{GKMNZ08} to form a new
automaton~${\cal A^+}$.
While being in the wake up state the agent moves across the edge
accessible via port 1 in
order to start regular performance (specified in~\cite{GKMNZ08}) in a node
that is
not incident to the lower end of a sham penalty edge.

\begin{lemma}\label{l:newlabel}
The new port labeling provides a mechanism to visit all nodes in the graph
in a periodic manner
by the agent equipped with a finite state automaton ${\cal A^+}$.
\end{lemma}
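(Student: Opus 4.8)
The statement is a correctness claim, so the plan is to show that the agent governed by ${\cal A}^+$, started in any state at any node of $G$ under the new labeling, eventually performs a closed walk that visits every node. The key idea is that the new labeling is a \emph{controlled perturbation} of the standard ``Euler tour of a spanning tree'' labeling of \cite{Ilc06,GKMNZ08}: every change is attached to one relabeled leaf $w\in Z$ together with its partner $v\in Y$, and each such change is \emph{locally undone by a bounce}, so coverage is preserved.

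First I would dispose of the wake-up state. The only nodes whose port~$1$ does not lead to the parent in the spanning tree $T$ are the relabeled leaves $w\in Z$, where after the swap port~$1$ leads, along the edge $e$, to the partner $v\in Y$. Since $v$ is not relabeled, port~$1$ at $v$ leads to the parent of $v$; hence after at most one port~$1$ move the agent is at a node that is not a relabeled leaf, and every further climb along port~$1$ stays at such nodes. From there the basic invariant of \cite{GKMNZ08} (``port~$1$ leads to the parent in $T$'') holds, and I would check that the agent then enters the standard routine correctly. This is precisely the role of the extra wake-up state: it forwards the agent along port~$1$ until that invariant is restored, so that the analysis of \cite{GKMNZ08} becomes applicable.

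The core is to show that, once the agent runs the standard routine, it still visits every node. Relative to the standard labeling there are only two coupled modifications, both attached to a relabeled leaf $w\in Z$ (which is indeed a leaf of $T$, since each $Z$-node is joined to $Y$ by a single tree edge) and its partner $v\in Y$. (i)~At $v$, the horizontal non-tree edge $e'$ supplied by property~(1) of Lemma~\ref{lm:3-conditions} is assigned the port $i+2$ immediately following the $i$ child ports of $v$, while the edge $e$ supplied by property~(2) gets some port $>i+2$. For this part I would observe that, after processing its $i$ children, the standard routine examines port $i+2$: it walks along $e'$, finds a port $\neq 1$ at the far end, performs a penalty return to the parent of $v$ exactly as in the unmodified analysis, and \emph{never examines} $e$ --- so the value $1$ assigned to $e$ at $w$ is invisible to the traversal. (ii)~At $w$, the tree edge to its parent is relabeled to port~$2$ (the sham penalty edge) and $e$ to port~$1$. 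Here I would observe that when the standard routine, scanning the children of the parent of $w$, reaches the port leading to $w$, it enters $w$ through port~$2\neq 1$, hence treats that edge as a penalty edge and immediately returns, resuming the traversal at an ancestor. To conclude that the bounce omits nothing I must use precisely the conditions defining which leaves are relabeled: a child of a bonding node that has no leaf-sibling is, by the ``largest subtree first'' rule of the initial labeling, the \emph{last} child of its parent; and a single child $w$ of a local node $\ell$ that is the last child of a node $x\in X$ makes $w$ trivially the last child of $\ell$ and keeps $\ell$ last at $x$. In either case the discarded subtree below $w$ is empty, $w$ itself is visited by the bounce, and every genuine subtree of an ancestor is still entered in turn, so the remainder of the traversal coincides with the standard Euler-tour analysis and covers all other nodes.

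Periodicity is then immediate: ${\cal A}^+$ is a deterministic finite automaton moving on a finite graph, so the sequence of configurations (state, current node, incoming port) is eventually periodic, and since the modified walk --- like the standard one --- returns to its initial configuration, it is periodic. The main obstacle is the case analysis in step~(ii): for \emph{every} relabeled leaf one must verify, using the relabeling side conditions together with the subtree-ordering rule, that the bounce-and-resume behaviour re-enters the traversal at the correct ancestor without skipping a subtree --- this is where the distinction between bonding and local nodes is actually used. A secondary obstacle is making the wake-up argument fully rigorous, i.e.\ checking that ${\cal A}^+$, started in an arbitrary configuration, is driven in finitely many steps into a configuration to which the analysis of \cite{GKMNZ08} applies verbatim. \qed
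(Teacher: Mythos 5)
Your proposal is correct and follows essentially the same route as the paper's proof: reduce to the nodes affected by the port swap, and check in each of the two cases (single leaf-child of a bonding node; single child of a local node that is the last child of a saturated node) that the agent, upon entering the relabeled leaf through port $2$, treats the sham penalty edge as a penalty edge, bounces back having visited the leaf, and resumes the standard traversal without skipping any subtree, while the non-tree edge $e$ is never examined because the horizontal edge $e'$ occupies port $i+2$ at $v$. Your additional remarks on the wake-up state and on eventual periodicity match the paper's (equally informal) treatment, and the residual obstacles you flag are exactly the points the paper also leaves to the automaton of~\cite{GKMNZ08}.
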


\begin{theorem}\label{th:3.5n}
For any undirected graph $G$ with $n$ nodes, it is possible to compute
a port labeling such that
an agent equipped with a finite state automaton ${\cal A^+}$ can visit
all nodes in $G$ in a periodic manner with a tour length that is no
longer than $3\frac{1}{2}n-2.$
\end{theorem}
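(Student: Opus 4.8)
The plan is to carry out an amortized counting argument over the Euler-tour traversal of the spanning tree $T$ produced from the three-layer partition, bounding the total number of edge traversals by charging each traversed edge to a node and showing that a $\tfrac14$-fraction of the nodes carry a reduced charge. I would first recall the cost model from \cite{Ilc06,GKMNZ08}: the traversal is an Euler tour of $T$ (cost $2(n-1)$ for the $n-1$ tree edges, each crossed twice) plus, for each non-saturated node $v$, a possible penalty edge examined after the last child of $v$ and immediately retreated from (cost $2$ per such node). Naively this gives $2(n-1)+2n = 4n-2$. The point of the new labeling is that at the leaves $w$ in $Z$ whose parent edge has been turned into a \emph{sham penalty edge} (via the port swap), the agent, upon examining port $2$ at the parent and finding port $2$ (not port $1$) at the other end, retreats \emph{without descending into $w$ at all}; so the subtree rooted at such a $w$ — namely the single node $w$ plus its tree edge — is never entered as a tree edge. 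The saving is that the tree edge to $w$ is traversed $0$ times instead of $2$ times, while the sham penalty edge contributes $2$ (examined and retreated) rather than the $2$ it would have cost as a tree edge, and crucially the parent of $w$ now needs \emph{no additional} penalty edge of its own. I would make this bookkeeping precise as: each such swapped leaf $w$ contributes net cost $2$ (the sham penalty crossing) in place of the $4$ it would otherwise force ($2$ for its tree edge plus $2$ for a penalty at its parent), a saving of $2$ per swapped leaf.

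Next I would invoke Lemma~\ref{l:newlabel} to know the construction is correct (all nodes visited periodically by ${\cal A^+}$, the retreat-from-wakeup-state handling included) and then establish the combinatorial heart: the number of swapped leaves is at least $\tfrac14 n$. Here I would use the two populations identified in the ``Initial port labeling'' paragraph — children of bonding nodes that have no leaf-siblings, and single children of suitable local nodes — together with the structural guarantees of Lemma~\ref{lm:3-conditions}: every node of $Z$ has $\ge 2$ neighbours in $Y$ (so the port swap can always be completed, property (2)), and every node of $Y$ has a horizontal edge outside $T_B$ (property (1), used to push the edge $e$ above port index $i+2$ at $v$ so it is never examined). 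The counting would partition $X\cup Y$ into the stars hanging off the backbone and argue, star by star (or bonding-path by bonding-path), that one can always designate enough leaves in $Z$ attached appropriately so that, against the $n=|X|+|Y|+|Z|$ total, swapped leaves number at least $n/4$. I expect this to mirror the $\tfrac18 n$ argument of \cite{GKMNZ08} but with the improved fraction coming from the extra flexibility of also swapping at single children of last local children of $X$-nodes.

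Finally I would assemble the bound: total tour length $\le 2(n-1) + 2(\text{non-saturated, non-swapped nodes}) + 2(\text{swapped leaves}) - (\text{savings})$, and substituting ``swapped leaves $\ge n/4$'' with the per-leaf saving of $2$ gives $4n - 2 - 2\cdot \tfrac{n}{4} = 3.5n - 2$, matching the claim $3\tfrac12 n - 2$. I would also note the edge cases (the root is chosen to be a leaf, so it forces no penalty; nodes in $X$ are saturated and forcefully penalty-free; a swapped leaf's parent being itself a special node must not be double-counted) and check the $O(|E|)$ time of the whole labeling procedure via Lemma~\ref{lm:3layer-complexity} for completeness.

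I expect the main obstacle to be the clean lower bound ``swapped leaves $\ge \tfrac14 n$'': one must show the two designated classes of $Z$-leaves are simultaneously realizable (the port-index constraint at each $Y$-node $v$ — setting $e'$ to port $i+2$ so $e$ exceeds it — must be consistent across all the swaps incident to $v$, and a given $v\in Y$ may be the upper endpoint of several sham penalty edges), and that the fraction genuinely reaches a quarter rather than being eaten away by backbone nodes or by $Y$-nodes with no qualifying $Z$-children. That is the step where the three-layer partition's properties (1) and (2), plus the ``last child of an $X$-node'' selection rule, have to be combined carefully.
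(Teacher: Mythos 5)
Your skeleton (correctness via Lemma~\ref{l:newlabel}, a $\tfrac14$-fraction amortisation, and the final arithmetic $2(n-1)+2n-2\cdot\tfrac{n}{4}=3\tfrac12 n-2$) matches the paper, but the central combinatorial claim you propose --- that the number of \emph{swapped} leaves (sham penalty edges) is at least $\tfrac14 n$ --- is not what the paper proves, and it is false in general. The port swap is only performed at certain leaves lying in $Z$; if the three-layer partition produces $Z=\emptyset$ (e.g.\ for $K_n$, where one node is saturated and all others land in $Y$), there are no swapped leaves at all, yet the bound still holds. The quantity that is actually $\geq \tfrac14 n$ is the number of nodes at which the agent \emph{avoids paying a penalty}, and most of these savings come from mechanisms other than the swap: saturated nodes never examine a penalty edge (they have no non-tree edge), and the agent's leaf-recognition and extended-leaf-recognition states let it skip the penalty at all but the first of several sibling leaves or extended leaves. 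The sham penalty edges only rescue two specific bad configurations (a lone leaf-child of a bonding node, and a lone extended leaf that is the last child of a saturated node). The obstacle you yourself flag --- establishing ``swapped leaves $\geq n/4$'' --- is therefore not a technical difficulty to be overcome but a sign the accounting is attached to the wrong objects.

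The paper's actual argument is a two-level amortisation over the spanning tree built from the three-layer partition. Globally, the $k$ saturated nodes (penalty-free) absorb the at most $2k-2$ bonding nodes and at most $k$ lone leaf-children of saturated nodes, which are pessimistically assumed to always pay, giving a $\geq\tfrac14$ penalty-free fraction within this group of $\leq 4k-2$ nodes. Locally, each remaining configuration hanging off a saturated or bonding node --- a local node with $\geq 2$ children, two or more extended leaves, a single extended leaf (where the swap is used), several sibling leaves, etc.\ (cases (a)--(f)) --- is checked separately to have a penalty-free fraction of at least $\tfrac14$ on its own. Summing, at most $\tfrac34 n$ penalty edges are examined, and with the $n-1$ tree edges, each traversed twice, one gets $3\tfrac12 n-2$. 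A smaller inaccuracy in your bookkeeping: the sham penalty edge \emph{is} the tree edge to $w$ (its port at $w$ is relabeled $2$), and the agent does traverse it and visit $w$ --- it must, for correctness; the saving of $2$ comes from not examining a further penalty edge at $w$ itself, not from skipping the descent into $w$.
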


Note that in the model with implicit labels, one port at each node has
to be distinguished in order to break symmetry in a periodic order of ports. This is
to take advantage of the extra memory provided to the agent.

\section{Further discussion}
\label{s:conclusion}
Further studies on trade-offs between the length of the periodic tour and
the memory of a mobile entity are needed.
The only known lower bound $2n-2$ holds independently of the size of the
available memory, and it refers to trees.
This still leaves a substantial gap in view of our new $3.5n$ upper bound.
Another alternative would be to look for as good as possible tour
for a given graph, for example, in a form of an approximate solution. Indeed,
for an arbitrary graph, finding the shortest tour may correspond to
discovering a Hamiltonian
cycle in the graph, which is NP-hard.

\paragraph{Acknowledgements.}
Many thanks go to Adrian Kosowski, Rastislav Kralovic, and Alfredo Navarra
for a number of valuable discussions on the main themes of this work.


\nocite{}
\bibliographystyle{splncs}
\bibliography{biblio-periodic-explo}



\section{Appendix}

\subsection{An example of the 3L-Partition}
The example from Figure~\ref{fig:3par-ex} illustrates the procedure
{\sc 3L-Partition} for a graph in Fig.~\ref{fig:3par-ex}(-). The graphs
from Fig.~\ref{fig:3par-ex}(a) through Fig.~\ref{fig:3par-ex}(e) present
the configuration after
each iteration of the loop, when a new node is chosen for saturation,
and the sets $X, Y, Z, P, R$ as well as the backbone tree $T_B$ are
modified accordingly. The saturated
nodes are $a, b, c, d$ and $e$, chosen from different sets $Y, Z$ and $P$.
In all configurations except (-) the content of each set $X, Y, Z, P, R$ is
represented at a different horizontal level.

\begin{figure}[htb] 
   \centering
   \includegraphics[scale=.90]{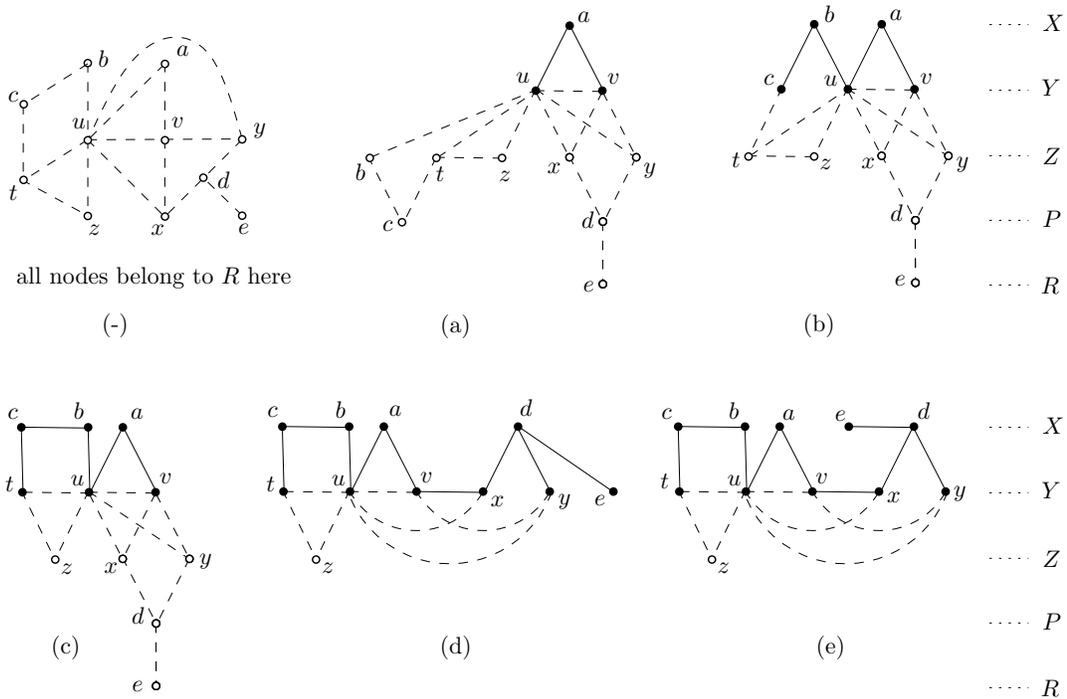}
   \caption{Example of functioning of procedure {\sc 3L-Partition}.
   Solid lines and black nodes belong to the backbone tree $T_B$.}
   \label{fig:3par-ex}
\end{figure}
\subsection{An example of local port ordering}
\begin{figure}[htb] 
   \centering
   \includegraphics[scale=0.80]{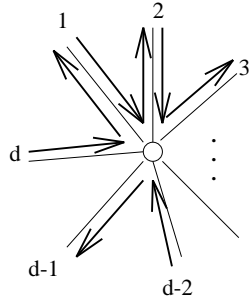}
   \caption{Ordering two bidirectional, two incoming and two outgoing underlying edges.}
   \label{Fig:first}
\end{figure}


\subsection{An example of a port numbering which induces a union of RH-cycles}
One can verify, that
if we exchange any two labels of one of the degree-three
nodes of this graph, the three cycles would merge to a single witness cycle.

\begin{figure}[h]
\includegraphics{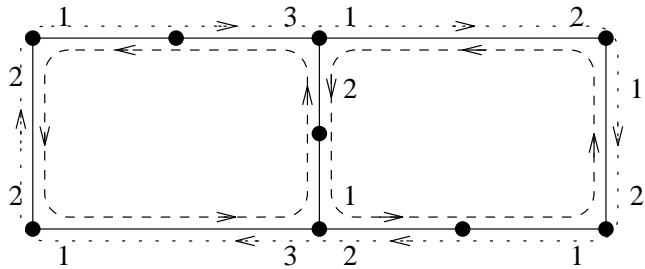}
\centering
\caption{Each node is RH-traversable, but the witness
ordering of nodes define several cycles, i.e., no cycle spans the whole graph.}
\label{Fig:2cycl}
\end{figure}

\subsection{Proof of Lemma~\ref{lem:algo}}
We show that the procedure creates a three-layer partition with a
distinguished backbone tree $T_{B}$.
Recall that the contents of sets $Y,Z,P$ and $R$ strictly depend on the content of $X.$
New nodes are inserted to $X$ gradually, one per {\sl round}, where each round
corresponds to a single execution of the main loop. We use the following
invariant. At the start of each round nodes in sets $X$ and $Y$ are spanned by a partial
backbone tree $T_B$ and a newly selected node for saturation is provided as $v$. At the end
of the first round the invariant is satisfied since $X$ contains only one node whose
neighbors in $G$ form $Y$ (step 3b) and all edges incident to it belong to $T_{B}$ (step 3c).
Assume now that
the invariant is satisfied at the beginning of some further round $i.$ When the newly
selected node $v$ is inserted to $X$ (step 3a) the content of other sets is recomputed
(step 3b). Note that $v$ is always selected, s.t., adding all edges incident to $v$ will
not form a cycle with edges in $T_{B}$.
If $v$ was chosen from $Y$ (this happens only
when $v$ has no horizontal incident edges), $v$ is already connected to $T_{B}$ so all
incident edges to $v$ (added in step 3c) will be connected to the rest of $T_{B}$ too.
Alternatively, if $v$ comes from $Z$ (this happens when all nodes in $Y$ have
horizontal edges outside of $T_{B}$) and $v$ has exactly one neighbor $w\in Y$
as soon as all edges incident to $v$ are inserted,
the new part of $T_{B}$ gets connected to the old one via the node $w.$
Finally, if $v$ was selected in $P$ (this happens when all nodes in $Y$ have
horizontal edges outside of $T_{B}$ and all nodes in $Z$ have at
least two neighbors in $Y$) then all edges incident to $v$ are inserted to $T_{B}.$
Note that when $v$ was moved to $X$ all its neighbors in $Z$ were
moved to $Y$ forming at least one new horizontal edge in $Y$ (formerly this edge
laid across sets $Y$ and $Z$). We use this new horizontal edge to connect a newly
formed star with the remaining part of $T_{B}.$
The procedure stops when it attempts to select a new node for saturation from an empty
set $P$ meaning that all nodes from $V$ are already distributed among $X,Y,Z$ for which,
according to our invariant, the backbone tree $T_{B}$ is already completed. \qed

\subsection{Proof of Theorem~\ref{lm:3layer-complexity}}
Each edge $\{v,w\} \in G$ is taken into consideration twice by the procedure, once
as the directed edge $(v,w)$ and the other time as the directed edge $(w,v)$. We prove
that for each directed edge the procedure performs a constant time task.

We suppose that each node of the graph is colored {\em red} when it has been already
tested for saturation (whether or not it was eventually included in set $X$) or
{\em green} otherwise. All nodes are initially colored green and put in set $R$.
During the execution of step (e) of \textsc{3L-Partition} procedure a green node
$v$ is selected for saturation from a set $Y, Z$ or $P$ (in that order). Depending
on the set to which belongs $v$ and the sets to which belong all its neighbors $v$
passes or fails the saturation tests. In particular:
\begin{enumerate}
 \item If $v \in Y$ then $v$ becomes saturated (and promoted to $X$) if none of
 its neighbors belongs to $Y$.
 \item If $v \in Z$ then $v$ becomes saturated if only one of its neighbors
 belongs to $Y$.
 \item If $v \in P$ then $v$ always becomes saturated.
\end{enumerate}
Moreover, each neighbor of $v$, whether it is green or red, may be promoted to a
higher ranking set among $Y, Z, P$ and $R$ depending on the result of the saturation
of $v$. This needs a second scan of the list of neighbours,
once the set into which $v$ is put has been determined. Hence $v$ is turned from
green to red as a result of a $O(d_v)$-time step. The \textsc{3L-Partition} procedure
terminates when all nodes are red so its overall complexity is $O(|E|)$. \hfill\qed

\subsection{Proof of Lemma~\ref{lem:only-RH}}

  Consider an arbitrary algorithm $\cA$ enabling an oblivious agent to
  periodically explore all trees. Let $f$ be its transition
  function. Fix an arbitrary $d>1$ and let $f_d$ be the function
  $i\mapsto f(s,i,d)$ from $\{1,\dots,d\}$ to $\{1,\dots,d\}$, where
  $s$ is the single state of the oblivious agent. Consider the
  $d+1$-node star of degree $d$. For $1\leq i \leq d$, let $v_i$ be
  the leaf reachable from the central node $u$ by the edge with port
  number $i$.

  For the purpose of contradiction, assume first that $f_d$ is not
  surjective. Let $i$ be a port number without pre-image. If the agent
  is started by the adversary in node $v_j$, with $j\neq i$, then the
  node $v_i$ is never explored. Therefore $f_d$ is surjective, and
  thus a permutation of the set $\{1,\dots,d\}$. Again for the purpose
  of contradiction, assume that $f_d$ can be decomposed into more than
  one cycle. Let $i$ be a port number outside $1$'s orbit (i.e. $1$
  and $i$ are not in the same cycle of the permutation). If the agent
  is started by the adversary in node $v_1$, then the node $v_i$ is
  never explored. Hence $f_d$ is a cyclic permutation, i.e., it is
  constructed with a single cycle. Since the equivalence classes of
  permutations (usually called conjugacy classes) correspond exactly
  to the cycle structures of permutations, the agent algorithm $\cA$
  is equivalent to the Right-Hand-on-the-Wall algorithm.

\subsection{Proof of Corollary~\ref{cor:TerseCycles}}
Note that after the execution of procedure \textsc{TerseCycles}   each node
of $v \in \overrightarrow{K}$ has an even number (different from zero)
of single edges incident to it. One can provide direction to all single
edges and port numbering at each node $v$, s.t., all the edges
outgoing from and incoming to $v$ belong to the same cycle. This is
done in two steps. First, the initial port numbering and
the direction of single edges are obtained via greedy selection of single
edges to form cycles. Later, if there is a node $v$ that belongs to two cycles
(based on single edges), the cycles are merged at $v$ via direct port number
manipulation. When this stage is accomplished, the set of nodes in $\overrightarrow{K}$
is partitioned into components, with all nodes in the same component belonging
to the same cycle based on single edges.
Note also that each component is at distance one from some other component,
where the components are connected by at least one two-way edge (this is a
consequence of the fact that each node has at least two single edges
incident to it). The two-way edge is used to connect the components.
Connecting successively pairs of components at distance one we end up
with a single component, i.e., a witness cycle spanning all the nodes.
Note that for each single edge introduced to $\overrightarrow{K}$ a two-way edge
from the spanning tree is reduced to single during the restore parity process.
This happens because single edges form a collection of stars and
at least one endpoint of each single edge (in a star) is free.
Thus the number of all edges in the witness cycle is bounded by $2n.$
\hfill\qed

\subsection{Proof of Theorem~\ref{th:4.3}}
Since $\overrightarrow{K}$ from line~8 contains $T$, it is a connected
spanning subgraph of $G$.
For each such component we apply procedure \textsc{3L-Partition},
obtaining three sets $X_{i},Y_{i}$ and $Z_{i}$, and a backbone tree
structure $T_{i}$. By lemma~\ref{lm:3-conditions} we can add single edges
incident to the nodes in $Y_{i}$ and pairs of single edges incident
to the nodes in $Z_{i}$ and then apply procedure \textsc{RestoreParity}
to each component $G_i$. Note that,
when each star $S_i$ is constructed, we may do it in such a way that no
path of length three or more is created. Indeed, otherwise we could remove
a middle edge of any path of length three and the set of spanned
nodes would remain the same. Hence $S_i$ is a forest of stars. Moreover we
can assume that only centers of such stars can be incident to edges forming $P_i,$
otherwise any edge leading to a leaf node incident to $P_i$ can be removed.
Consequently, after termination of the ``for'' loop, each
node of $G$ is RH-traversable in $\overrightarrow{K}$. Moreover,
since $\overrightarrow{K} \supseteq T$, $\overrightarrow{K}$ is two-way
connected, so the conditions of lemma~\ref{lm:main} are satisfied.
Hence, at the end of the algorithm $\mathfrak{C}$ contains a witness cycle.
\noindent

In order to bound the size of the witness cycle we will bound the number
of edges in $\overrightarrow{K}$. Note first that $2n-2$ edges are used in $T$
(i.e. $n-1$ two-way edges). Suppose first that for each component $G_i$, containing $n_i$
nodes of $G \setminus T$, no single edges were added in lines 4 and 5,
that is $P_i = \emptyset$ and $S_i = \emptyset$.
Hence, the call of procedure {\sc RestoreParity}  from line 6 did not
modify $G_i$. In consequence, $2(n_i-1)$ edges were added for $G_i$
or $2(n_1+n_2+ \dots +n_k)-2k$ in total. This value is maximized
for $k=1,$ giving $2n-2$ edges added in the ``for'' loop, and $4n-4$ total
edges in $\overrightarrow{K}$. The count remains the same if some $P_i \neq \emptyset$,
since exactly two edges were added for each node of $Z_i$ in line 4.
\noindent

Suppose now that $S_i \neq \emptyset$, in line 5, for some components $G_i$.
For each endpoint  $v \in Y_i$ of a star belonging to $S_i$ and a single edge $e$
added for $v$ in $S_i$ in line 5, we check whether there is some other edge
that was reduced (from two-way to single) during the call of
procedure {\sc RestoreParity} in line 6. This happens when $v$ is not
incident to a horizontal edge of the backbone tree $T_i$, since one of
the edges incident to $v$ will then become single. Thus the addition
of $e$ is done at no extra cost, i.e., the total number of edges remains the same.
However, when
two endpoints of a horizontal edge are incident to two edges of $T_i$, only
one such edge will be amortised. Consider then a collection of single horizontal edges,
belonging to the backbone tree $T_i$ with edges of $S_i$ incident to both of
their end-points.
The collection forms a forest. In each tree pick a root arbitrarily and
repeat the following
process until there is only one edge left in it. Take an arbitrary leaf and amortise
the edge of $S_i$ incident to it with the tree edge leading to the parent of the leaf.
Remove the leaf and the edge that leads to its parent from further consideration.
Note that in this case amortisation is one to one. When this process is accomplished
each tree is reduced to one edge. In other words we have a collection of independent
single horizontal edges belonging to the backbone tree. Note that each such edge
is associated with two independent edges of $S_i$. Clearly the worst case
happens when the forest was formed by independent single edges. This implies that the
number of such horizontal edges is not larger than $\frac{n_{i}}{3}$.

Taking into consideration the maximal penalty that we have to pay for edges
added in line 5 of the algorithm, the number of edges forming $\overrightarrow{K}$ is
bounded by $4 \frac{1}{3}n-4$. \hfill\qed

\subsection{A lower bound example for the {\sc FindWitnessCycle} algorithm}
The example from Figure~\ref{fig:parachute} shows that the bound from
Theorem \ref{th:4.3} is tight (up to an additive constant) for our algorithm.
More precisely, the image shows that there exist graphs on which our algorithm
may produce a witness cycle of size $4\frac{1}{3}n - 7$.
\begin{figure}[!htb]
  \begin{center}
\includegraphics[width=11cm]{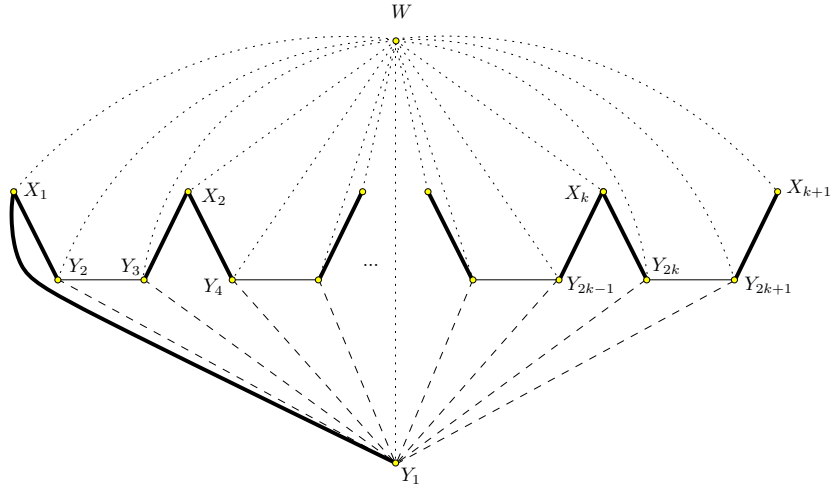}
  \end{center}
  \caption{Example of a graph for which our algorithm
           gives a witness cycle of size not smaller than
           $4\frac{1}{3}n-7$}
  \label{fig:parachute}
\end{figure}
The main part of the graph containing $n=(3k+1)$ nodes consists of $k$ copies
of four nodes
$X_{i}Y_{2i}Y_{2i+1}X_{i+1}$, for $i=1,2,\dots, k$, where the last node of
each but the last copy is identified with the first node of the next copy
(cf. Fig.~\ref{fig:parachute}). Moreover, an extra node $Y_1$ is
adjacent to each of the nodes $Y_{2},Y_{3},\dots, Y_{2k+1}$,
and a node $W$ is adjacent to all other nodes in the graph.
Suppose that the star at node $W$ is chosen by the algorithm
as the spanning tree $T$, represented by the dotted edges in the
picture. The procedure {\sc 3L-Partition}  locates nodes
$X_{1},X_{2},\dots, X_{k}$ in set $X$ and
the nodes $Y_{1},Y_{2},\dots, Y_{2k+1}$
in set $Y$ (set $Z$ is empty). Suppose that the spanning tree is the
path $Y_{1}X_{1}\dots X_{k+1}$ - represented by the solid edges in
Fig.~\ref{fig:parachute}. Since the algorithm adds one horizontal
edge for each node from class $Y$, all edges incident to $Y_1$
are added to the structures. It is easy to see, that the parity
restoring procedure will chose the edges $Y_iY_{i+1}$ as the single
edges of the structure. In consequence, only $2k$ dashed edges and $k$ thin
solid edges in Fig.~\ref{fig:parachute} are chosen as single
edges --- all other edges (i.e. $3k+2$ dotted edges and
$2k+1$ bold solid edges) are taken as two-way edges. This results in
a witness cycle of size $13k+6$, i.e.\ containing
$4\frac{1}{3}n-7$ edges.

\begin{figure}[htbp] 
   \centering
   \includegraphics{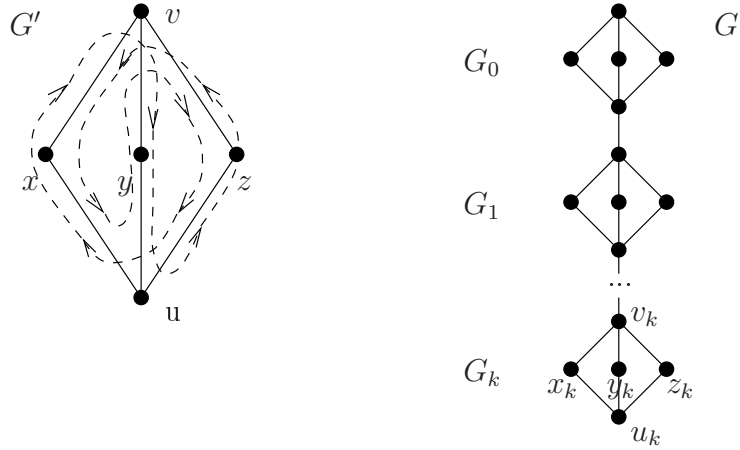}
   \caption{The lower bound based on diamond graphs.}
   \label{fig:lb}
\end{figure}

\subsection{Proof of Theorem~\ref{th:complexity}}
In $O(|E|)$ time we can find a spanning tree $T$ of $G$ and the connected
components of $G \setminus T$. By lemma \ref{lm:3layer-complexity}, for each
connected component $G_i$ having $n_i$ nodes and $e_i$ edges, the call of
the procedure {\sc 3L-Partition}  terminates in $O(e_i)$ time. The construction
of sets $P_i$ in line 4 and set $S_i$ in line 5 as well as the call of procedure
{\sc RestoreParity} in line 6 are completed in $O(n_i)$ time.
Altogether, the ``for'' loop terminates in $O(|E|)$ time. The construction
of $\overrightarrow{K}$ in line 8 and $\mathfrak{C}$ in line 9 are done
in time proportional to their sizes, i.e., $O(n)$.

We show now that line 10, where the rules {\em Merge3} and {\em EatSmall} are
repeatedly applied, may be performed within $O(n)$ time. We chose any ordering
$\gamma$ of cycles and we attach to each edge a label corresponding to the cycle
to which the edge belongs. Let $C^*$ be the largest cycle according to $\gamma$
and $v$ be any vertex of $C^*$. We perform repeatedly rules {\em Merge3}
(resulting cycle obtaining rank of $\gamma(C^*)$) and  {\em EatSmall} to
vertex $v$ until no longer possible. Each time we traverse the edges of
the cycle (or a part of the cycle) added to $C^*$ and change their labels to $\gamma(C^*)$.
When neither {\em Merge3} nor {\em EatSmall} is applicable to $v$ we proceed to vertex $v'$
- the actual successor of $v$ in $C^*$ - and repeat the procedure of applying rules
{\em Merge3} or {\em EatSmall} to $v'$. Although $C^*$ changes dynamically and
some vertices may be traversed many times we end up by traversing all vertices
eventually in $C^*$. By lemma~\ref{lm:main} at the end of this process
$C^*$ becomes a witness cycle.
Note that the complexity of each {\em Merge3} and {\em EatSmall} operation is
proportional to the number of edges added to $C^*$.
By theorem~\ref{th:4.3} the overall complexity of line 10 is $O(n)$. \hfill\qed

\subsection{Proof of Theorem~\ref{th:lb}}
Consider first a single {\sl diamond graph} $G',$ see left part of
Figure~\ref{fig:lb}. W.l.o.g., we can assume that we start the traversal
through $(v, x)$. Consider the successor of $(x,u)$. Also, w.l.o.g., we
can take $(u,y)$ as the successor. Now there is only one feasible
successor of $(y,v)$ and that is $(v, z).$ All other edges violate
either RH-traversability ($(v,y)$) or leave $z$ unvisited.
Similarly, the only possible successor of $(z,u)$ is $(u,x)$
($(u,y)$ has already been traversed with a different predecessor,
and $(u,z)$ violates RH-traversability), of $(x,v)$ is $(v,y)$ and
of $(y,u)$ is $(u,z)$. Therefore, each edge of $G'$ must be used in
both directions.

Consider now a chain of diamond graphs from the right side
of Figure \ref{fig:lb}, starting the graph traversal at node $v_0$.
From the fact that each edge in the witness cycle is traversed at
most twice (one time in each direction) it follows that when returning
from $v_i$ to $u_{i-1}$, all nodes in $G_i$ (as well as in all $G_j,$
for $j>i$) must have been visited.
Note that from RH-traversability it follows that the successor of
$(u_{i-1}, v_i)$ cannot be the same (in reverse direction) as the
predecessor of $(v_i, u_{i-1})$, and similarly the successor of
$(v_{i}, u_{i-1})$ cannot be the same as the predecessor of $(u_{i-1},v_{i})$.
In turn this means that the analogous arguments (as used in $G'$)
apply also to each $G_i$, therefore all edges of $G$ must be
traversed in both directions.

The theorem now follows directly for $n=5k$. If $n$ is not a multiple of
$5$, an extra path of $l$ nodes can be added to $u_k$ to satisfy the
claim of the theorem. \hfill\qed

\subsection{Proof of Theorem~\ref{l:newlabel}}
%
Note that it is enough to prove that no difficulty arises at nodes with
numbers affected by the modified labeling scheme.

\noindent
{\bf Case C1:}
Consider first the case when the port numbers are swapped at some node
$w_1$ which is
a single child of a bonding node $u$ (see Fig~\ref{f:w1w2}).
When during traversal the agent returns from the subtree rooted in a
child of $u$ accessible
via port $i-1$ it enters via port $i$ the edge leading to $w_1.$ This edge
is interpreted as
a penalty edge and the agent after visiting $w_1$ returns immediately to
$u$ and then it goes
with no further action to the parent of $u.$ Note that if the labeling
was not changed the agent
would act similarly, however it would examine additionally a penalty
edge located at $w_1.$
Thus thanks to the new labeling scheme we save one penalty at the node $w_1.$

\noindent
{\bf Case C2:}
Consider now the second case when the port numbers are swapped at a
single child $w_2$
of a local node $v$, s.t., $v$ has no siblings different from leaves
to its right (accessible via larger ports), see Fig~\ref{f:w1w2}.
Assume that $s$ is the (saturated) parent of $v$ and port
$i$ at $v$ leads to $w_2.$
When during traversal the agent returns from the subtree
rooted in a child of $v$ accessible via port $i-1$ it enters via port
$i$ the edge leading
to $w_2.$
When it learns that
the port label at $w_2$ is different from $1$ it interprets the sham
penalty edge linking
$v$ and $w_2$ as the penalty edge. The agent returns immediately to $v$
while switching to the leaf recognition state ($v$ would be interpreted as
the first leaf of $s$).
This means that all remaining leaves accessible from $s$ (if any) will
be visited at no extra charge, i.e., without paying penalty at them.
Thus the agent does not miss the node $w_2$ and it also saves penalty
at $w_2$ and possibly at all leaves that are siblings of $v.$ \hfill\qed

\subsection{\bf Proof of Theorem~\ref{th:3.5n}}
The main line of the proof explores the fact that the fraction of nodes
at which the agent
manages to save on penalties is at least $\frac{1}{4}.$ The proof is
split into global
and local amortisation arguments.

\vspace*{0.2cm}
\noindent
{\bf Global amortisation} [saturated nodes amortise all bonding nodes
and single children
of saturated nodes]
%
%

Note that in a three-layer partition with $k$ saturated nodes there are
at most $2k-2$ bonding nodes, since introduction of a new saturated node
implies creation of at most two bonding nodes.
Note also that there are at most $k$ single leaves (with no siblings)
that are children of
saturated nodes. In the global amortisation argument we assume that at
these nodes, i.e., all bonding nodes and all single leaves of
saturated nodes, in the worst case the agent always pays
penalty (examines the penalty edge).
Fortunately, all of these $\le 3k-2$ nodes ($2k-2$ bonding nodes and $k$
single leaves
of saturated nodes) can be amortised by $k$ saturated nodes. Thus, as
required,
the fraction of nodes where the agent does not pay penalty is
$\frac{1}{4}.$
For all other nodes in $T$ we use the local amortisation argument.

\begin{figure}
\begin{center}
\includegraphics[scale=0.60]{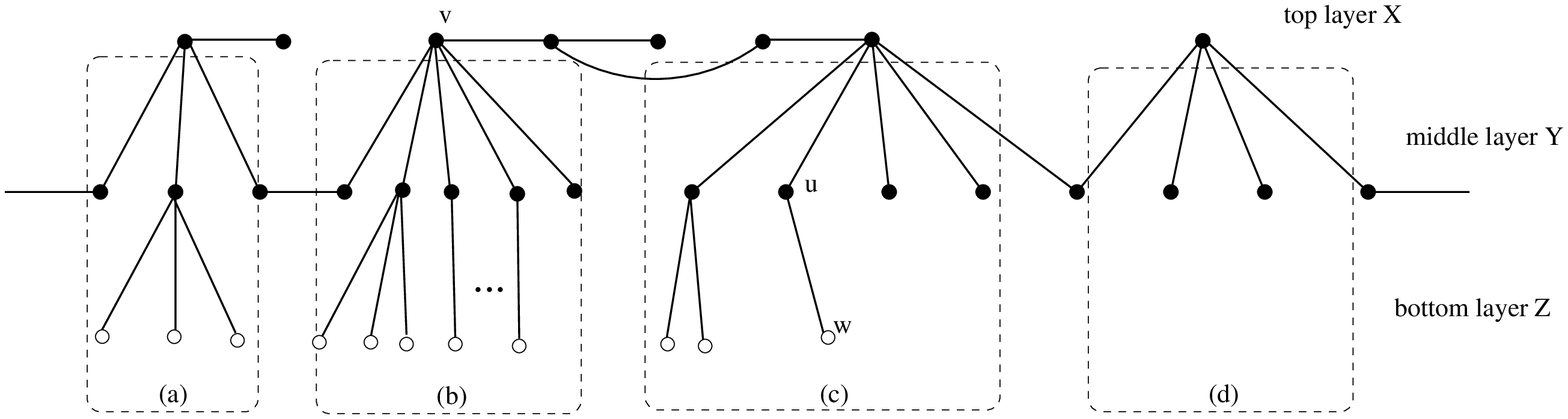}
\end{center}
\caption{Local amortisation argument -- cases (a), (b), (c) and (d).\label{f:abcd}}
\end{figure}

\vspace*{0.2cm}
\noindent
{\bf Local amortisation} [direct amortisation of nodes within small
subtrees]

\noindent
The local amortisation argument is used solely on two-layer subtrees
accessible
from saturated nodes, i.e., formed of local nodes and (possibly) their
children,
cases (a), (b), (c), and (d), see Figure~\ref{f:abcd},
as well as on leaves accessible from bonding nodes, cases (e) and (f).

The local amortisation argument involving local nodes is split into cases
(a), (b), (c), and (d) in relation to the size of subtrees rooted in
local nodes.
We start the analysis with the largest subtrees in case (a) and gradually
move towards smaller structures in cases (b) and (c), finishing with single
local nodes in case (d).

\noindent
{\bf (a)} Consider any subtree $T_S$ with at least two children rooted in
a local node. In this case the initial labeling remains unchanged.
During traversal of $T_S$ the agent pays penalties at the local node and
at its
first child where it switches to the leaf recognition state. In this state
no further penalties at the leaves of $T_S$ are paid.
Since the number of children $i\ge 2$ the fraction of nodes in the subtree
without penalties is at least $\frac{1}{3}.$

\noindent
{\bf (b)} Consider now the case where a saturated node $v$ has
at least two children (local nodes) with single children (two {\sl extended
leaves} according to the notation from~\cite{GKMNZ08})
accessible from $v$. In this case the number of penalties
paid during
traversal of all extended leaves is limited to two since the penalties
are paid at
both nodes of the first extended leaf where the agent switches to
extended leaves
recognition state. The remaining nodes of the extended leaves are visited
at no extra cost.
In this case the fraction of nodes without penalties is at least
$\frac{1}{2}.$

\noindent
{\bf (c)} Consider now the case where a saturated node has only one
extended
leaf (a local node $u$ and its single child $w$) possibly followed by
some regular
leaves formed of local nodes. In this case the initial labeling is
changed and
the sham penalty edge $(u,w)$ is introduced (case C1 in the proof of
lemma~\ref{l:newlabel}).
When the agent visits the extended leaf it enters the
sham penalty edge
interpreting it as the penalty edge. Thus the penalty is paid only at
the local
node $u.$ Moreover if $u$ has sibling leaves all of them are visited at
no extra
cost since after visiting a sham penalty edge the agent is in the leaf search
state (\cite{GKMNZ08}).
Thus also in this case the fraction of nodes where the penalty is not paid
is at least $\frac{1}{4}.$

\noindent
{\bf (d)} It may happen that a saturated node has several
children that
are leaves in $T$ not preceded by an extended leaf. In this case the
penalty is paid
only at the first leaf and all other leaves are visited (in leaf search
mode)
at no extra cost.
(Recall that the case when a saturated node has only one child that is a
leaf in $T$
was already considered as a part of the global amortisation argument.)

The remaining cases of the local amortisation argument refers to the leaves
accessible via bonding nodes.

\noindent
{\bf (e)} When a bonding node has at least two children (all children
are leaves)
during traversal the agent pays penalty only at the first child while
all other children
are visited at no extra cost (thanks to the leaf search state).
Thus the fraction of nodes (leaves) where the penalty is avoided is at
least $\frac{1}{2}.$

\noindent
{\bf (f)} Finally consider the case where a bonding node $u$ has exactly
one child $w$
(case C2 from the proof of lemma~\ref{l:newlabel}).
In this case thanks to the sham penalty edge $(u,w)$ no penalty is paid at $w,$
i.e., the fraction of nodes without penalties is $1.$

In conclusion, the fraction of nodes at which the penalty is avoided is
bounded
from below by $\frac{1}{4}$ in all considered cases. Thus the number of
visited penalty
edges is bounded by $\frac{3}{4}n.$ Since the number of edges in the
spanning tree is
$n-1$ the agent visits at most $1\frac{3}{4}n$ edges where each edge is
visited in
both directions. This concludes the proof that the length of the tour is
bounded by $3\frac{1}{2}n-2.$ \hfill\qed


\end{document}